\documentclass[letter,11pt]{article}

\usepackage[bottom=1in,top=1in,left=1in,right=1in]{geometry}
\usepackage{tikz}

\usepackage{writes}
\usepackage{mathtools}

\newcommand{\RR}{\mathbb{R}}
\newcommand{\Ex}[2][]{\mbox{\rm\bf E}_{#1}\hspace{-0.03cm}\left[#2\right]}
\newcommand{\calE}{\mathcal{E}}
\newcommand{\calF}{\mathcal{F}}
\renewcommand{\Pr}{\mbox{\rm\bf Pr}}
\usepackage{thmtools}
\usepackage{thm-restate}

\usepackage{url,hyperref}

\allowdisplaybreaks

\title{Noisy, Greedy and Not So Greedy $k$-means++}
\author{Anup Bhattacharya\thanks{Indian Statistical Institute, India} \and Jan Eube\thanks{University of Bonn, Germany} \and Heiko Röglin\footnotemark[2] \and Melanie Schmidt\footnotemark[2]}
\date{}

\begin{document}

\maketitle

\begin{abstract}
The $k$-means++ algorithm due to Arthur and Vassilvitskii~\cite{AV07} has become the most popular seeding method for Lloyd's algorithm. It samples the first center uniformly at random from the data set and the other $k-1$ centers iteratively according to~$D^2$-sampling, i.e., the probability that a data point becomes the next center is proportional to its squared distance to the closest center chosen so far. $k$-means++ is known to achieve an approximation factor of $\mathcal{O}(\log k)$ in expectation.

Already in the original paper on $k$-means++, Arthur and Vassilvitskii suggested a variation called \emph{greedy $k$-means++ algorithm} in which in each iteration multiple possible centers are sampled according to~$D^2$-sampling and only the one that decreases the objective the most is chosen as a center for that iteration. It is stated as an open question whether this also leads to an $\mathcal{O}(\log k)$-approximation (or even better). We show that this is not the case by presenting a family of instances on which greedy $k$-means++ yields only an $\Omega(\ell\cdot\log k)$-approximation in expectation where $\ell$ is the number of possible centers that are sampled in each iteration.

We also study a variation, which we call \emph{noisy $k$-means++ algorithm}. In this variation only one center is sampled in every iteration but not exactly by $D^2$-sampling anymore. Instead in each iteration an adversary is allowed to change the probabilities arising from $D^2$-sampling individually for each point by a factor between~$1-\eps_1$ and~$1+\eps_2$ for parameters~$\eps_1\in[0,1)$ and $\eps_2\ge 0$. We prove that noisy $k$-means++ computes an $\mathcal{O}(\log^2 k)$-approximation in expectation. We also discuss some applications of this result.
\end{abstract}

\thispagestyle{empty}
\newpage
\setcounter{page}{1}

%!TEX root = note.tex

\section{Introduction}

Clustering is a very important tool in many machine learning applications. The task is to find structure that is hidden in input data in the form of clusters, and to do this in an unsupervised way. Since clusters come with very different properties depending on the application, a variety of clustering algorithms and measures to judge clusterings have arisen in the last decades. Among those, a hugely popular method is Lloyd's algorithm~\cite{L57} (also called the $k$-means algorithm), which for example was voted to be one of the ten most influential data mining algorithms in machine learning at
 the IEEE International Conference on Data Mining (ICDM) in 2006~\cite{WK+08}.

Lloyd's algorithm is an iterative local search heuristic operating on points from Euclidean space~$\mathbb{R}^d$. The measure that it implicitly strives to optimize is the $k$-means cost function: For a point set $X \subset \mathbb{R}^d$ and a center set $C \subset \mathbb{R}^d$, the $k$-means cost function is defined as
\[
\Phi(X,C) = 
\sum_{x \in X} \min_{c\in C} ||x-c||^2,
\]
the sum of the squared distances of all points to their respective center. The $k$-means problem asks to minimize the $k$-means cost over all choices of $C$ with $|C|=k$. In an optimal solution of the $k$-means problem, the centers are means of their clusters, and Lloyd's algorithm iterates between computing the means of all clusters as the new center set and reassigning all points to their closest centers to form new clusters. The $k$-means cost function is also called \emph{sum of squared errors} because when the means are viewed as representatives of the clusters, then the $k$-means cost is the squared error of this representation. 

The $k$-means problem is NP-hard~\cite{ADHP09,MNV09}, and it is also hard to approximate to arbitrary precision~\cite{ACKS15,LSW17}. On the positive side, constant-factor approximations are possible, and the best known factor is 6.357 due to a break-through result by Ahmadian et al.~\cite{ANSW17,LSW17}. However, the constant-factor approximation algorithms for $k$-means are not very practical. On the other hand, Lloyd's method is hugely popular in practice, but can produce solutions that are arbitrarily bad in the worst case.

\begin{algorithm}[b]
	\caption{The $k$-means++ algorithm~\cite{AV07}}\label{alg:kmeans++}
	\begin{algorithmic}[1]
		\State Sample a point $c_1$ independently and uniformly at random from $X$.
		\State Let $C=\{c_1\}$.
		\For{$i=2$ to $k$}
		  \State For all $x \in X$, set \[ p(x) := \frac{\min_{c \in C} ||x-c||^2}{\sum_{y\in X} \min_{c \in C} ||y-c||^2}. \]
		  \State Sample a point $c_i$ from $X$, where every $x \in X$ has probability $p(x)$.
			\State Update $C=C\cup \{c_i\}$.
		\EndFor
		\State Run Lloyd's algorithm initialized with center set $C$ and output the result.
	\end{algorithmic}
\end{algorithm}
A major result in clustering thus was the $k$-means++ algorithm due to Arthur and Vassilvitskii~\cite{AV07} in 2007, which enhances Lloyd's method with a fast and elegant initialization method that provides an $\mathcal{O}(\log k)$-approximation in expectation. 
The $k$-means++ algorithm samples $k$ initial centers by adaptive sampling, where in each step, a point's probability of being sampled is proportional to its cost in the current solution (we will refer to this kind of sampling as $D^2$-sampling in the following). After sampling $k$ centers, the solution is refined by using Lloyd's algorithm. Algorithm~\ref{alg:kmeans++} contains pseudo code for the $k$-means++ algorithm.

The beauty of the algorithm is that it has a bounded approximation ratio of $\mathcal{O}(\log k)$ in expectation, and at the same time computes solutions that are good (much better than $\Theta(\log k)$) on practical tests. By feeding the computed centers into Lloyd's method, the solutions are refined to even better quality. 
Nevertheless, Arthur and Vassilvitskii  show that the approximation ratio of $k$-means++ is tight in the worst case: They give an (albeit artificial) example where the expected approximation ratio is $\Omega(\log k)$, and this has been extended by now to examples where $k$-means++ outputs a $\Omega(\log k)$-approximate solution with high probability~\cite{BR13}, and even in the plane~\cite{BJA16}.

Due to its beneficial theoretical and practical properties, $k$-means++ has by now become the de-facto standard for solving the $k$-means problem in practice. What is less known is that the original paper~\cite{AV07} and the associated PhD thesis~\cite{V07} actually propose a possible improvement to the $k$-means++ algorithm: the \emph{greedy $k$-means++ algorithm}. Here in each of the adaptive sampling steps, not only one center but $\ell$ possible centers are chosen (independently according to the same probability distribution), and then among these the one is chosen that decreases the $k$-means costs the most. 
This is greedy because a center that reduces the cost in the current step might be a bad center later on (for example if we choose a center that lies between two optimum clusters, thus preventing us from choosing good centers for both on the long run). The original paper~\cite{AV07} says:

\begin{quote}
\emph{Also, experiments showed that k-means++ generally performed better if it selected several new centers during
each iteration, and then greedily chose the one that decreased $\Phi$ (the cost function) as much as possible. Unfortunately, our proofs do not carry over to this scenario. It would be interesting to see a comparable (or better) asymptotic result proven here.}
\end{quote}

The intuition is that $k$-means++ tries to find clusters in the dataset, and with each sample, it tries to find a new cluster that has not been hit by a previously sampled center. This has a failure probability, and the super-constant approximation ratio stems from the probability that some clusters are missed. In this failure event, the algorithm chooses two centers that are close to each other compared to the optimum cost. Greedy $k$-means++ tries to make this failure event less likely by boosting the probability to find a center from a new cluster that has not been hit previously and greedily choosing the center.

For $\ell = 1$, the greedy $k$-means++ becomes the $k$-means++ algorithm, and for very large $\ell$ it becomes nearly deterministic, a heuristic that always chooses the current minimizer among the whole dataset. It is easy to observe that the latter is not a good algorithm: Consider Figure~\ref{fig:det:badexample}. In the first step, the center that minimizes the overall $k$-means cost in the next step is $b$. But if we choose $b$, then the second greedy center is either $a$ or $c$, and we end up with a clustering of cost $\Omega(n)$, while the solution $\{a,c\}$ has a cost of $1$ (and the optimum solution is even slightly better). 
\begin{figure}
\centering
\begin{tikzpicture}
\node (a) [circle,draw,fill,inner sep=0cm,minimum width=0.2cm,label=above:{$n$},label=below:{$a$}] at (-4,0) {};
\node (b) [circle,draw,fill,inner sep=0cm,minimum width=0.2cm,label=above:{$1$},label=below:{$b$}] at (0,0) {};
\node (c) [circle,draw,fill,inner sep=0cm,minimum width=0.2cm,label=above:{$n$},label=below:{$c$}] at (4,0) {};
\draw [<->] (a) to node [fill=white] (l) {$1$} (b);
\draw [<->] (b) to node [fill=white] (l) {$1$} (c);
\end{tikzpicture}
\caption{A bad example for the deterministic heuristic that always chooses the current cost minimizer as the next center. An optimal $2$-clustering costs less than $1$, while a clustering where $b$ is a center costs $\Omega(n)$. \label{fig:det:badexample}}
\end{figure}
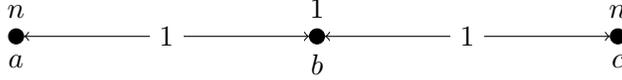

So the crucial question is how to set $\ell$, and whether there is an $\ell$ for which greedy $k$-means++ outperforms $k$-means++. 
It has been shown in~\cite{ADK09} that for any optimal clustering of an input data set, $k$-means++ has a constant probability to sample a point from a \lq new\rq{} optimal cluster in each iteration, where new means that no point from that cluster has previously been chosen as a center. This leads to a bicriteria approximation, since after $\mathcal{O}(k)$ centers, the algorithm has discovered all optimal clusters in expectation.
Following the intuition that stems from this analysis, a natural idea would be to set $\ell = \mathcal{O}(\log k)$: This reduces the probability to pick no point from a new cluster to $\Omega(1/k)$, and by union bound, the failure probability that this event happens in one of the $k$ samples decreases to a constant. 
This choice is also advertized by Celebi et al.~\cite{CKV13}, who feature greedy $k$-means++ in a study of initialization strategies for Lloyd's method. They report that it performs better than $k$-means++, for a suggested value of $\ell = \log k$. The PhD thesis~\cite{V07} reports experiments with $\ell = 2$ that outperformed $k$-means++. It also states that the approximation guarantee of greedy $k$-means++ is unknown (pp. 62+63). 

We initiate the analysis of the greedy $k$-means++ algorithm.
Firstly, we prove that greedy $k$-means++ is \emph{not} asymptotically better than $k$-means++. More precisely, we show the following statement.

\begin{restatable}{theorem}{thmLowerBound}\label{thmLowerBound}
For any $k \ge 4$ and any~$\ell$, there exists a point set $X_{k,\ell}$ such that the expected approximation guarantee of greedy $k$-means++ is~$\Omega(\min\{\ell,k/\log{k}\}\cdot \log k)$.
\end{restatable}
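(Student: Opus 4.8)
The plan is to build $X_{k,\ell}$ so that greedy $k$-means++ is repeatedly lured into choosing ``trap'' points that no optimal solution uses; each such trap covers parts of two would-be-optimal clusters at once and hence wastes one of the $k$ centers, and since the budget is exactly $k$ this shortage forces the algorithm to leave clusters underserved. The single-sample algorithm avoids the traps because it only sees one $D^2$-sample per step, whereas greedy sees $\ell$ of them and the gadget is designed so that the trap is the best of those $\ell$ whenever it appears.

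\emph{The trap gadget.} A gadget consists of two clusters $A,B$ of weight $w$ each at mutual distance $2\rho$, together with one trap point $t$ at their midpoint of much smaller weight $w_t=\Theta(w/\ell)$; the gadget lies at distance $R\gg\rho$ from everything else. Two facts drive the argument. First, as long as the gadget is ``untouched'' (nearest center at distance $\approx R$), selecting $t$ decreases $\Phi$ by roughly $2wR^2+w_tR^2-2w\rho^2$, whereas selecting any single point of $A$ or of $B$ decreases it by roughly $2wR^2+w_tR^2-4w\rho^2$; the two leading terms cancel, so greedy strictly prefers $t$ to every non-trap candidate (in this or any other untouched gadget) by a margin of $\Theta(w\rho^2)$. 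Second, a single $D^2$-sample hits $t$ with probability $\Theta(w_t/w)=\Theta(1/\ell)$, but among $\ell$ independent $D^2$-samples the trap shows up with probability $1-(1-\Theta(1/\ell))^{\ell}=\Omega(1)$. Consequently each time greedy discovers a gadget it traps it with constant probability, while $k$-means++ traps it with probability only $O(1/\ell)$; a trapped gadget costs $\Theta(w\rho^2)$ and needs two further centers to be fixed, whereas an untrapped one is finished with one further center, and an optimal solution spends two centers per gadget and pays only $\Theta(w_t\rho^2)=\Theta(w\rho^2/\ell)$ there.

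\emph{Assembling the gadgets.} Already $\Theta(k)$ gadgets at geometrically separated scales (so that greedy handles them essentially one at a time) show that greedy wastes $\Omega(1)$ extra center per gadget, leaves $\Omega(k)$ subclusters served only by traps, and is thus $\Omega(\ell)$-approximate in expectation while $k$-means++ stays $O(1)$-approximate. To gain the extra $\log k$ factor I would superimpose the trap gadgets on the structure of Arthur and Vassilvitskii's $\Omega(\log k)$ lower-bound instance for $k$-means++: take $\Theta(k/\log k)$ far-apart copies, each a scaled copy of their hard instance (which by itself forces a $\log k$ overhead and consumes $\Theta(\log k)$ centers) decorated with a trap at each of its $\Theta(\log k)$ levels; greedy is lured into the trap at a constant fraction of the levels, boosting the per-copy overhead to $\Theta(\ell\log k)$, while the trap weights are small enough (relative to the copy's optimal cost) that $k$-means++ still incurs only $\Theta(\log k)$ per copy. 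Summing, the optimum costs $\Theta(k/\log k)$ and greedy costs $\Theta((k/\log k)\cdot\ell\log k)=\Theta(k\ell)$, a ratio of $\Theta(\ell\log k)$. The $\min$ in the statement comes from the budget: the wasted centers must fit into the $k$ available, which requires $\ell=O(k/\log k)$, and for larger $\ell$ the construction saturates at $\Omega(k)=\Omega((k/\log k)\cdot\log k)$.

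\emph{Main obstacle.} The real work is the multi-scale stochastic analysis certifying that greedy follows this trajectory. The scale separations across the $\Theta(k/\log k)$ copies and the $\Theta(\log k)$ levels inside each must be chosen so that at every step the ``currently intended'' piece dominates the total cost, so that greedy's $\ell$ samples concentrate there and the small residual costs left behind never redirect it; this forces one to bound many lower-order terms. I then have to track the center budget through all of these nested phases simultaneously despite the randomness — in particular ruling out that greedy recovers by revisiting trapped pieces with leftover centers — and to absorb the stochastic first center(s) and the initial transient by padding the instance so that those uncontrolled steps become negligible. Carrying out this bookkeeping rigorously, rather than the gadget arithmetic sketched above, is the crux of the proof.
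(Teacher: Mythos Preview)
Your trap intuition --- a single point that reduces the potential more than any genuine cluster representative and that greedy therefore locks onto whenever it appears among the $\ell$ samples --- is exactly the mechanism the paper exploits, but the construction and the source of the $\log k$ factor are completely different, and far simpler than what you outline.

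The paper uses a \emph{single} instance with a \emph{single} trap: place $k$ copies at each vertex $v_1,\dots,v_{k-1}$ of a regular $(k-1)$-simplex, $k-1$ copies at $v_k$, and one point at the simplex center $o$. The optimum costs $O(1)$. Conditioned on the cluster at $v_k$ not yet being hit, the center $o$ reduces the potential strictly more than any vertex point (it is simultaneously close to all remaining clusters), so greedy picks $o$ whenever $o$ is among the $\ell$ candidates. The $D^2$-probability of sampling $o$ in iteration $i$ is $\Theta\!\bigl(1/(k(k-i+1))\bigr)$, so over $\ell$ samples it is $\Theta\!\bigl(\ell/(k(k-i+1))\bigr)$, and summing over $i=2,\dots,k-1$ gives $\Omega(\ell\log k/k)$. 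If $o$ is ever chosen, some vertex is left without a center and the cost is $\Omega(k)$; multiplying gives expected cost $\Omega(\ell\log k)$. A separate short argument shows that the competing failure mode (hitting $v_k$ early) has probability bounded by a constant for $\ell\ge 2$, because $v_k$ has the \emph{least} potential reduction and is chosen only if all $\ell$ samples land there. The $\min$ with $k/\log k$ is just the probability being capped at $1$.

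So the $\log k$ factor in the paper comes from the harmonic sum over iterations, not from any multi-level or multi-copy structure. There is no scale separation, no gadget replication, no interaction with the Arthur--Vassilvitskii instance, and essentially no multi-phase bookkeeping.

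Regarding your plan specifically: the first stage (many trap gadgets at separated scales giving $\Omega(\ell)$) is plausible, though already the accounting of how greedy distributes its remaining centers among trapped vs.\ untrapped gadgets is delicate and not carried out. The second stage, however, has a concrete gap. You write of the Arthur--Vassilvitskii lower-bound instance having ``$\Theta(\log k)$ levels'' and propose a trap at each level; but that instance is itself a simplex with $k$ equal clusters and has no level structure --- the $\log k$ there comes from a coupon-collector-type argument over which clusters get hit. It is unclear what ``decorating each level with a trap'' means for that instance, and without a precise construction there is no way to verify that the $\log k$ and the $\ell$ factors compose rather than interfere. Given that the paper obtains both factors from a single harmonic sum on one simple instance, I would recommend abandoning the multi-scale route.
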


Theorem~\ref{thmLowerBound} implies that the worst-case approximation guarantee of greedy $k$-means++ cannot get better by choosing $\ell > 1$. In particular for $\ell = \log k$, the approximation guarantee worsens to $\Omega(\log^2 k)$. The lower bound example is close to the original lower bound example in \cite{AV07}, yet the proof of the lower bound proceeds very differently. Morally, instead of missing clusters (which becomes less likely due to the multiple samples), the failure event is to choose a bad point as a center. This alone is responsible for the $\Omega(\ell \log k)$ lower bound, while the original $\Omega(\log k)$ bound stems from missing clusters.

As indicated in the quote from $\cite{AV07}$ above, the original proof of $k$-means++ does not carry over to greedy $k$-means++, not even if we aim for a higher approximation guarantee like $\mathcal{O}(\ell \log k)$. Roughly speaking, the main problem in the analysis is that while the probability to choose a point as a center can only be increased by a factor of $\ell$ by the greedy procedure, there is no multiplicative lower bound on how much individual probabilities can be \emph{de}creased. Indeed, if a point $x\in P$ is the worst greedy choice, then its probability to be chosen decreases from some $p(x)$ in the original $k$-means++ algorithm to $(p(x))^{\ell}$, which is much smaller than $p(x)$. If this happened to potentially good centers, then it could  hurt the approximation factor badly.

We proceed to study a different variation of $k$-means++, which may be of independent interest, the \emph{noisy $k$-means++} algorithm. This algorithm performs $k$-means++, but does not sample with exact probabilities. Instead of sampling a point $x$ with probability $p(x)$ as suggested by $D^2$-sampling, it uses an arbitrary probability $p'(x)$ with $(1-\epsilon_1)p(x) \le p'(x) \le (1+\epsilon_2) p(x)$, where $\epsilon_1 \in [0,1)$ and $\epsilon_2 \ge 0$.
 If we cast greedy $k$-means++ as a noisy $k$-means++ algorithm, we observe that we get a trivial upper bound of $\epsilon_2 = \ell-1$, however, no trivial lower bound on how much the probabilities are skewed.

Noisy $k$-means++ is also interesting it its own right, since in practice, the probabilities actually computed are prone to rounding errors. Due to the iterative nature of $k$-means++, it is not at all clear how large the effect of a small rounding can be. We show that the following theorem holds.

\begin{restatable}{theorem}{thmNoisy}\label{thm:noisy}
Let $T_k$ denote the set of centers sampled by noisy $k$-means++ on dataset $X$ and assume that $\frac{k}{\ln k} \ge \max\{18,\frac{24(\eps_1+\eps_2)(1+\eps_2)}{(1-\eps_1)^2}\}$. Then, 
\[
  \E[\Phi(X,T_k)]\leq \mathcal{O}\left(\left(\frac{1+\eps_2}{1-\eps_1}\right)^3\cdot \log^2(k)\cdot \OPT_k(X)\right). 
\]
If $\frac{k}{\ln k} \le \max\{18,\frac{24(\eps_1+\eps_2)(1+\eps_2)}{(1-\eps_1)^2}\}$, 
then $\E[\Phi(X,T_k)]\leq \mathcal{O}\left(\left(\frac{1+\eps_2}{1-\eps_1}\right)^4\cdot \log^2 \left(\frac{1+\eps_2}{1-\eps_1}\right) \cdot \OPT_k(X)\right)$.
\end{restatable}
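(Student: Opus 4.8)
The plan is to follow the structure of the original $k$-means++ analysis of Arthur and Vassilvitskii, tracking carefully how the multiplicative noise $(1-\eps_1) \le p'(x)/p(x) \le (1+\eps_2)$ propagates through each step. Let me set $\gamma := (1+\eps_2)/(1-\eps_1) \ge 1$ as the relevant distortion parameter. The core of the analysis of $k$-means++ rests on two lemmas about $D^2$-sampling: first, that if we sample a point from an optimal cluster $A$ that has not yet been "covered" by any center, then in expectation the contribution of $A$ to the cost drops to $O(1)$ times its optimal contribution (this is the "clean" cluster lemma, based on the fact that $D^2$-sampling from within $A$ behaves like $\approx 2$-approximate uniform-ish sampling relative to the mean); and second, the potential-function / induction argument that bounds the expected cost after all $k$ centers are chosen, where the $\log k$ factor enters because uncovered clusters accumulate cost at a harmonic-series rate. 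First I would re-derive the clean-cluster lemma under noise: when we condition on sampling some point from $A$, the conditional distribution over points of $A$ is distorted by at most a factor $\gamma^2$ in each probability (numerator distorted by $\le (1+\eps_2)$, the normalizing denominator restricted to $A$ distorted by $\ge (1-\eps_1)$ — actually one should be careful here, I would bound it by the ratio), so the expected cost bound degrades from $2\,\OPT_A$ to $O(\gamma^2)\,\OPT_A$ or thereabouts. Similarly the probability that the newly sampled point lands in a specific uncovered cluster, versus lands in an already-covered region, gets distorted by a factor $\gamma$.

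Next I would carry out the inductive argument of the main $k$-means++ proof, in the form used by Aggarwal–Deshpande–Kannan or the streamlined versions, but with each "probability of hitting a new cluster" replaced by its noisy counterpart. The key quantitative change is that the constant probability $\Omega(1)$ of sampling from a new cluster becomes $\Omega(1/\gamma)$ — because the adversary can, in the worst case, deflate the mass on the "good" region by $(1-\eps_1)$ and inflate the mass on the covered (bad) region by $(1+\eps_2)$ simultaneously. This is exactly why one power of $\log k$ turns into $\log^2 k$: the standard analysis already loses a $\log k$, and now the "number of rounds needed to cover everything with good probability" argument, or equivalently the summation over the potential drops, picks up an extra logarithmic factor from the $1/\gamma$-probability events (concretely, one expects to need $\Theta(\gamma \log k)$ successful rounds, and a Chernoff/union bound over $k$ rounds forces an extra $\log k$). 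I would track the $\gamma$-dependence explicitly so that the cubic power $\gamma^3$ in the statement falls out: plausibly one $\gamma^2$ from the clean-cluster cost bound and one $\gamma$ from the hitting probability, multiplied together through the induction.

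For the regime split, the hypothesis $\frac{k}{\ln k} \ge \max\{18, \frac{24(\eps_1+\eps_2)(1+\eps_2)}{(1-\eps_1)^2}\}$ is what makes the relevant Chernoff/union-bound tail negligible and lets the "$\log^2 k$" term dominate; when $k$ is too small relative to the noise, one instead bounds things crudely — essentially, running noisy $k$-means++ is at worst like running it on a problem of "effective size" $\gamma^{O(1)}$, so the approximation is $O(\gamma^4 \log^2 \gamma)$, absorbing the $\log^2(\text{effective } k)$ into $\log^2\gamma$. I would handle this by choosing a threshold $k_0 = \Theta(\gamma^2)$ (matching the $\max$ in the hypothesis up to constants) and arguing: if $k \ge k_0$ the first bound applies via the concentration argument; otherwise substitute $k_0$ for $k$ in the first bound, which is monotone in the relevant way, yielding the second bound. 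The main obstacle I anticipate is the clean-cluster lemma under adversarial, per-point noise: unlike in the noiseless case where $D^2$-sampling within a cluster has a clean closed form, here the adversary can correlate the distortions across points of $A$ to push the sampled center as far from the mean as the $\gamma$-budget allows, and one must verify that the crucial inequality $\Ex{\Phi(A,\{c\}) \mid c \in A} \le O(\gamma^2)\,\OPT_A$ still holds — in particular that the distortion does not interact badly with the $\|x - \mu(A)\|^2$ weighting in a way that costs more than a constant power of $\gamma$. A secondary obstacle is bookkeeping the interplay between the two logarithmic factors so that the exponent of $\gamma$ in the final bound is genuinely $3$ (and $4$ in the small-$k$ regime), rather than something larger, which requires being somewhat careful about where constants versus $\gamma$-factors are spent in the union bound.
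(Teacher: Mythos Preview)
Your plan has a genuine gap: it misidentifies where the second $\log k$ factor comes from, and it does not anticipate the actual technical obstacle.

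The clean-cluster lemma under noise is the \emph{easy} part. The paper dispatches it in two short lemmas (adaptations of the standard ones), and the distortion is only a single factor of $\gamma = (1+\eps_2)/(1-\eps_1)$, not $\gamma^2$: conditioned on sampling from cluster $A$, the probability of picking $x\in A$ is at most $\gamma\cdot \Phi(x,C)/\Phi(A,C)$, and the rest of the Arthur--Vassilvitskii argument goes through verbatim. So the obstacle you flag as ``main'' is not one.

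The paper does \emph{not} run the Arthur--Vassilvitskii induction; it uses Dasgupta's potential-function proof. There the key quantity is the average cost $\Phi(U_t,T_t)/u_t$ of the uncovered clusters, and the crucial fact in the noiseless case is that this average is \emph{nonincreasing in expectation}: when a new cluster is covered, it is covered by proportional sampling, so high-cost clusters are removed preferentially. Under adversarial noise this monotonicity fails --- the adversary can bias sampling toward cheap uncovered clusters, driving the average up. The heart of the paper's proof is a separate lemma (proved via an abstract urn-type experiment with a Chernoff bound on the number of ``adversarial'' removals and a union bound over survivors) showing that the average can blow up by at most a factor $O(\gamma\log k)$ in expectation. \emph{That} is the source of the second $\log k$, and of the threshold $\frac{k}{\ln k}\ge \max\{18, 24(\eps_1+\eps_2)(1+\eps_2)/(1-\eps_1)^2\}$, which is precisely what is needed for that Chernoff bound to bite.

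Your proposed mechanism --- ``one expects to need $\Theta(\gamma\log k)$ successful rounds'' --- does not match the setting: the number of rounds is fixed at $k$, and a mere constant-factor drop in the per-round hitting probability from $\Omega(1)$ to $\Omega(1/\gamma)$ would contribute a factor of $\gamma$, not of $\log k$, to the final bound. Without the average-increase lemma (or an equivalent), your outline does not explain the $\log^2 k$, and the Dasgupta-style potential argument you would need simply does not close. The $\gamma^3$ in the final bound decomposes as one $\gamma$ from the covered-cluster cost, one $\gamma$ from the probability of a wasted iteration in the potential's upper bound, and one $\gamma$ from the average-increase lemma --- not $\gamma^2$ from the clean-cluster step as you suggest.
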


We use Theorem~\ref{thm:noisy} to analyze a \emph{moderately greedy} variant of $k$-means++, where the simple idea is that with probability $p$, we do a normal $k$-means++ step, and with probability $1-p$, we do a greedy $k$-means++ step. 
The idea is that in this variant, a point is never completely disregarded, so we do get a lower bound on the probabilities, yet in many steps, we do still profit from the additional power of greedy $k$-means++ seen in experiments. For constant $p$ and $\ell$, this variant gives an $\mathcal{O}(\log^2 k)$-approximation by Theorem~\ref{thm:noisy}.

\paragraph{Additional related work.}
Bachem et al.~\cite{BachemLHK16} suggest to speed up $k$-means++ by replacing the exact sampling according to the probabilities~$p(x)$ by a fast approximation based on Markov Chain Monte Carlo sampling. They prove that under certain assumptions on the dataset their algorithm yields the same approximation guarantee in expectation as $k$-means++, namely~$\mathcal{O}(\log{k})$. Their algorithm can be viewed as a special case of noisy $k$-means++. However, their analysis of the approximation factor is based on making the total variation distance between the probability distributions~$p$ and~$p'$ (in every step) so small that with high probability their algorithm behaves identically to $k$-means++. In contrast to this, Theorem~\ref{thm:noisy} also applies to choices of~$\eps_1$ and~$\eps_2$ for which noisy $k$-means++ behaves differently from $k$-means++ with high probability.

Lattanzi and Sohler~\cite{LS19} propose an intermediate improvement step to be executed between the $D^2$-sampling and Lloyd's algorithm in order to improve the solution quality to a constant factor approximation in expectation. Their algorithm starts with a $k$-means++ solution and then performs $\mathcal{O}(k \log\log k)$ improvement steps: In each such step, a new center is sampled with $D^2$-sampling, and if swapping it with an existing center improves the solution, then this swap is performed. While this is a greedy improvement step and thus a bit related to greedy $k$-means++, their algorithm is closer in spirit to a known local search algorithm by Kanungo et al.~\cite{KMNPSW04} which uses center swaps (starting on an arbitrary solution) to obtain a constant-factor approximation, but needs a lot more rounds and is impractical.

In his master's thesis, Pago~\cite{P18} shows that for $\ell = \log k$, the example in Figure~\ref{fig:det:badexample} can be extended such that greedy $k$-means++ gives an $\Omega(\log k)$-approximation in expectation.

The bicriteria analysis by Aggarwal et al.~\cite{ADK09} mentioned above was improved by Wei~\cite{W16} who showed that for any $\beta > 1$, sampling $\beta k$ centers with $D^2$-sampling yields an $\mathcal{O}(1)$-approximation in expectation (with $\beta k$ centers). All above cited works assume that $k$ and $d$ are input parameters; if one of them is a constant, then there exists a PTAS for the problem~\cite{CAKM16,FL11,FRS19}.

%\input{lower}
%!TEX root = note.tex

\section{Lower Bound for Greedy \texorpdfstring{$k$}{k}-means++}\label{sec:strong-lower}

In this section we construct an instance on which greedy $k$-means++ yields only an $\Omega(\ell \log k)$-approximation in expectation. More precisely, we analyze Algorithm~\ref{alg:greedykmeans++}.
\begin{algorithm}[h!]
	\caption{Greedy $k$-means++ algorithm~\cite{AV07}}\label{alg:greedykmeans++}
	\begin{algorithmic}[1]
		\State Sample a point $c_1$ independently and uniformly at random from $X$.
		\State Let $C=\{c_1\}$.
		\For{$i=2$ to $k$}
		  \State For all $x \in X$, set \[ p(x) := \frac{\min_{c \in C} ||x-c||^2}{\sum_{y\in X} \min_{c \in C} ||y-c||^2}. \]
		  \State Sample a set $S$ of $\ell$ points independently according to this probability distribution.
			\State Let $c_i=\arg \min_{u\in S} \Phi(X,C \cup \{u\})$. 
			\State Update $C=C\cup \{c_i\}$.
		\EndFor
		\State Run Lloyd's algorithm initialized with center set $C$ and output the result.
	\end{algorithmic}
\end{algorithm}

Note that we only draw one sample in the first step. This is due to the fact that in the first step, $k$-means++ is guaranteed to discover a new cluster, so there is no reason to draw multiple samples.%\footnote{did we want to add something about this case?\ref{whataboutthiscase}}

The instance is based on a regular $(k-1)$-simplex with side length $\sqrt{2}$. Let the vertices of this simplex be denoted by $v_1,\ldots,v_k$. There are $k$ points each at vertices $v_1,\ldots,v_{k-1}$, $(k-1)$ points at vertex $v_k$, and there is one point at the center~$o$ of the simplex. Let~$X$ denote the set of all these points. The simplex can be constructed explicitly in~$\RR^k$ by letting~$v_i$ be the $i$th canonical unit vector for each~$i$ and~$o=(1/k,\ldots,1/k)$. Then it follows that the distance between the center~$o$ and any vertex~$v_i$ is~$\sqrt{(k-1)/k}$.

An optimal clustering~$(\Cs{1},\ldots,\Cs{k})$ of this instance is obtained as follows: The clusters $\Cs{1},\ldots,\Cs{{k-1}}$ consist of the $k$ points at vertices $v_1,\ldots,v_{k-1}$, respectively, and the cluster $\Cs{k}$ consists of the $(k-1)$ points at vertex $v_k$ and the point at the center~$o$. The cost of this clustering is bounded from above by $||o-v_k||^2 = \frac{k-1}{k}=O(1)$.

\thmLowerBound*
\begin{proof}
Notice that for $\ell = 1$ there is nothing to show since a lower bound of $\Omega(\log k)$ is known for this case. So in the following, we assume that $\ell \ge 2$. Furthermore we consider first only the case that~$\ell \le \frac{k}{20\ln(k-1)}$ and defer the discussion of larger~$\ell$ to the end of the proof.

We consider the point set $X$ constructed above. 
Consider a $k$-clustering~$C$ obtained by greedy $k$-means++ that contains the point at~$o$ as one of the~$k$ centers. The costs of this clustering are at least~$(k-1)^2/k=\Omega(k)$ because there exists at least one~$i$ such that~$C$ has no center at~$v_i$. In the best case this is~$v_k$, which generates the aforementioned costs because the~$(k-1)$ points at~$v_k$ will be assigned to the center at~$o$. The approximation guarantee of this clustering is only~$\Omega(k)$. We will prove that with sufficiently large probability, greedy $k$-means++ places one of the centers at~$o$.

We start the analysis by defining the following events for all~$i\in[k]$:
\begin{align*}
   &F_i = \text{the center chosen in the $i$th iteration lies at~$v_k$},\\
   &G_i = \text{the center chosen in the $i$th iteration lies at~$o$},\\
   &H_i = F_i \cup G_i.
\end{align*}
We denote by~$\Phi_i$ the potential after $i-1$ iterations if in these iterations no point from~$\Cs{k}$ has been chosen as a center. Since the probability to choose the same $v_i$ more than once is zero, this means that %Under this assumption
 $i-1$ centers from different clusters from $\Cs{1},\ldots,\Cs{k-1}$ have been chosen. 
In the remaining $k-i+1$ clusters, $k$ points pay a cost of $2$, except for the one point at $o$ which pays $1-1/k$. Thus,
\[
   \Phi_i = 2((k-i+1)k-1) + 1-\frac{1}{k} 
\]
and
\[
    2((k-i+1)k-1) \le \Phi_i \le 2k(k-i+1).
\]

We define
\[
   F = F_1 \cup (F_2\cap\overline{H_1}) \cup \ldots \cup (F_{k-1}\cap\overline{H_1}\cap\ldots\cap\overline{H_{k-2}})
\]
as the event that in one the first~$k-1$ iterations a point at~$v_k$ is chosen as a center and that this is the first center chosen from~$\Cs{k}$. We exclude the last iteration because $\Pr(F_k)$ is significantly higher than $\Pr(F_{i})$ for $i \le k-1$.

We will prove a lower bound for the probability of the event~$\overline{F}\cap(G_2\cup\ldots\cup G_{k-1})$ because if this event happens then the point at~$o$ is one of the centers computed by greedy $k$-means++, i.e., the approximation factor is only~$\Omega(k)$. 

If the event~$F$ occurs then we cannot prove a lower bound on the approximation guarantee of greedy $k$-means++. Hence, we will prove an upper bound for the probability of~$F$. Observe that
\[
   \Pr[F] \le \sum_{i=1}^{k-1}\Pr[F_i\mid \overline{H_1}\cap\ldots\cap\overline{H_{i-1}}]\cdot \Pr[\overline{H_1}\cap\ldots\cap\overline{H_{i-1}}]
   \le \sum_{i=1}^{k-1}\Pr[F_i\mid \overline{H_1}\cap\ldots\cap\overline{H_{i-1}}]
\]
and
\[
   \Pr[F_1] = \frac{k-1}{k^2} \le \frac{1}{k}.
\]
Consider the situation that~$1 \le i-1\le k-2$ iterations have already been performed and that in these iterations cluster~$\Cs{k}$ has not been covered. Then each point from an uncovered cluster~$\Cs{j}$ with~$j<k$ reduces the potential by~$2k$. Each point at~$v_k$ reduces the potential by~$2(k-1)$ and the point at~$o$ reduces the potential by
\[
  (\underbrace{(k-i+1)}_{\ge 2}k-1)(1+1/k)+1-1/k \ge (2k-1)(1+1/k)+1-1/k = 2(k+1-1/k) > 2k.
\]
Hence, the points at~$v_k$ have the least potential reduction and thus a point at~$v_k$ is only selected as new center in iteration~$i$ if all~$\ell$ sampled candidates are at~$v_k$. Hence, we obtain 
\[
   \Pr[F_i\mid \overline{H_1}\cap\ldots\cap\overline{H_{i-1}}] = \left(\frac{2(k-1)}{\Phi_i}\right)^{\ell}.
\]
Altogether this implies
\begin{align*}
  \Pr[F]  \le \Pr[F_1] + \sum_{i=2}^{k-1}\Pr[F_i\mid \overline{H_1}\cap\ldots\cap\overline{H_{i-1}}]
          \le \frac{1}{k} + \sum_{i=2}^{k-1}\left(\frac{2(k-1)}{\Phi_i}\right)^{\ell}.
\end{align*}
Together with $\Phi_i \ge 2((k-i+1)k-1)$ this implies
\begin{align*}
 \Pr[F] & \le \frac{1}{k} + \sum_{i=2}^{k-1}\left(\frac{2(k-1)}{2((k-i+1)k-1)}\right)^{\ell}\\
        & = \frac{1}{k} + \sum_{i=2}^{k-1}\left(\frac{k-1}{(k-i+1)k-1}\right)^{\ell}\\
        & \le \frac{1}{k} + \sum_{i=2}^{k-1}\left(\frac{k}{(k-i+1)k}\right)^{\ell}\\
        & = \frac{1}{k} + \sum_{i=2}^{k-1}\left(\frac{1}{i}\right)^{\ell},
\end{align*}
where the inequality in the penultimate line of the calculation follows from~$\frac{a}{b} < \frac{a+1}{b+1}$ for~$0<a<b$. Using~$\ell\ge 2$ and $k\ge 4$, %(remember that for~$\ell=1$, i.e., the standard $k$-means++ algorithm, a lower bound of~$\Omega(k)$ is already known for the approximation guarantee),
 it follows
\[
   \Pr[F] \le \frac{1}{k} + \sum_{i=2}^{k-1}\left(\frac{1}{i}\right)^{2} \le \frac{1}{k} + \sum_{i=2}^{\infty}\left(\frac{1}{i}\right)^{2} = \frac{1}{k}+\left(\frac{\pi^2}{6}-1\right) \le 0.9.
\]
This shows that with constant probability, the failure event~$F$ does not occur, i.e., with constant probability none of the points from~$v_k$ is chosen as a center in the first~$k-1$ iterations.

Now let us consider the probability that the point at~$o$ is selected as a center. We have argued above that the potential reduction of the point at~$o$ in iteration~$2\le i\le k-1$ is larger than~$2k$ if cluster~$\Cs{k}$ has not been covered in the first~$i-1$ iterations. We have also seen that any other point reduces the potential by at most~$2k$. Hence, in order to select the point at~$o$ as center it suffices already if it belongs to the $\ell$ candidates chosen in iteration~$i$. Denote the event that the $j$th sample in iteration $i$ is $o$ by $G_{ij}$. Then for~$i\in\{2,\ldots,k-1\}$,
\begin{align*}
   \Pr[G_i\mid\overline{H_1}\cap\ldots\cap\overline{H_{i-1}}] & = 
	\Pr[\cup_{j=1}^{\ell}G_{ij}\mid\overline{H_1}\cap\ldots\cap\overline{H_{i-1}}] \\
	& \ge \sum_{j=1}^{\ell} \Pr[G_{ij}\mid\overline{H_1}\cap\ldots\cap\overline{H_{i-1}}] - \sum_{1 \le j_1 < j_2 \le j} \hspace*{-0.5cm}\Pr[G_{ij_1}\cap G_{ij_2}\mid\overline{H_1}\cap\ldots\cap\overline{H_{i-1}}]\\
	%1 - \left(1-\frac{1-1/k}{\Phi_i}\right)^{\ell}
   & = \frac{\ell(1-1/k)}{\Phi_i} -  \frac{\binom{\ell}{2}(1-1/k)^2}{\Phi_i^2}\\
   & \ge \frac{\ell(1-1/k)}{\Phi_i} -  \frac{\ell^2(1-1/k)^2}{\Phi_i^2},
\end{align*}
where the first inequality follows from Bonferroni inequalities. % (or Taylor expansion). 
Since~$\ell\le k/(20 \ln(k-1) )\le k/2$, we obtain
\begin{align*}
        \frac{\ell(1-1/k)}{\Phi_i} \le \frac{\ell}{\Phi_i} \le \frac{\ell}{2((k-i+1)k-1)} \le \frac{\ell}{k} \le \frac{1}{2}.
\end{align*}
This is helpful, because for any $a\in \mathbb{R}$ with $0 \le a \le 1/2$, it holds that $a-a^2 \ge a/2$. Thus, the previous two inequalities imply
\begin{equation}\label{eqn:PrGi}
   \Pr[G_i\mid\overline{H_1}\cap\ldots\cap\overline{H_{i-1}}] \ge 
   %\frac{\ell(1-1/k)}{\Phi_i}\left(1-\frac{\ell(1-1/k)}{\Phi_i}\right) 
	\frac{\ell(1-1/k)}{\Phi_i} - \left(\frac{\ell(1-1/k)}{\Phi_i}\right)^2
	\ge \frac{\ell(1-1/k)}{2\Phi_i}.
\end{equation}

Let us now condition on the event~$\overline{F}$, which happens with constant probability. Then we can write the probability of the event we care about as
\begin{align*}
  \Pr[\overline{F}\cap(G_2\cup\ldots\cup G_{k-1})] 
	& = \Pr[\overline{F}]\cdot\Pr[G_2\cup\ldots\cup G_{k-1}\mid\overline{F}] 
	  = \Pr[\overline{F}]\cdot\sum_{i=2}^{k-1}\Pr[G_{i}\mid \overline{F}]\\
	& \ge \Pr[\overline{F}]\cdot\sum_{i=2}^{k-1}\Pr[G_{i}\mid \overline{F_1}\cap \ldots \cap \overline{F_{i-1}}],
\end{align*}
where we used in the penultimate step that the events~$G_i$ are mutually exclusive and in the last step that $\overline{F} \subseteq \overline{F_1}\cap \ldots \cap \overline{F_{i-1}}$. We cannot use~\eqref{eqn:PrGi} directly to bound~$\Pr[G_{i}\mid \overline{F_1}\cap \ldots \cap \overline{F_{i-1}}]$ because the condition is different (in~\eqref{eqn:PrGi} we condition on the event that no point from~$\Cs{k}$ has been chosen as center in the first~$i-1$ iterations while conditioning on~$\overline{F_1}\cap \ldots \cap \overline{F_{i-1}}$ only implies that no point at~$v_k$ has been chosen as a center). 

To prove a lower bound on~$\Pr[G_2\cup\ldots\cup G_{k-1}\mid \overline{F_1}\cap \ldots \cap \overline{F_{i-1}}]$, we consider a different random experiment~$E$. This random experiment consists of~$k-2$ iterations numbered from~$2$ to~$k-1$ and each iteration~$i$ is successful with probability~$\Pr[G_i\mid\overline{H_1}\cap\ldots\cap\overline{H_{i-1}}]$ independent of the other iterations. Then~$\Pr[G_2\cup\ldots\cup G_{k-1}\mid \overline{F_1}\cap \ldots \cap \overline{F_{i-1}}]$ equals the probability that at least one of the iterations of~$E$ is successful. Let~$E'$ denote the same random experiment as~$E$ only with modified success probabilities. In~$E'$ iteration~$i$ is successful with probability~$\frac{\ell(1-1/k)}{2\Phi_i}$. Due to~\eqref{eqn:PrGi} and Bonferroni inequalities and using~$k\ge 4$, we obtain
\begin{align*}
   \Pr[G_2\cup\ldots\cup G_{k-1}\mid \overline{F}]
   &= \Pr[\text{at least one success in~$E$}]\\
   &\ge \Pr[\text{at least one success in~$E'$}]\\
   &\ge \sum_{i=2}^{k-1}\frac{\ell(1-1/k)}{2\Phi_i} - \sum_{2\le i<j\le k-1}\frac{\ell(1-1/k)}{2\Phi_i}\cdot \frac{\ell(1-1/k)}{2\Phi_j}\\
   &\ge \sum_{i=2}^{k-1}\frac{\ell(1-1/k)}{4k(k-i+1)} - \sum_{2\le i<j\le k-1}\frac{\ell}{4((k-i+1)k-1)}\cdot \frac{\ell}{4((k-j+1)k-1)}\\
   &\ge \sum_{i=2}^{k-1}\frac{\ell(1-1/k)}{4k(k-i+1)} - \sum_{2\le i<j\le k-1}\frac{\ell}{3k(k-i+1)}\cdot \frac{\ell}{3k(k-j+1)}\\
   & = \frac{\ell(1-1/k)}{4k} \sum_{i=2}^{k-1}\frac{1}{i} - \frac{\ell^2}{9k^2}\sum_{2\le i<j\le k-1}\frac{1}{(k-i+1)(k-j+1)}\\
   & \ge \frac{3\ell}{16k} \sum_{i=2}^{k-1}\frac{1}{i} - \frac{\ell^2}{9k^2}\left(\sum_{i=2}^{k-1}\frac{1}{i}\right)^2\\
    & \ge \frac{3\ell}{16k} (\ln(k-1)-1) - \frac{\ell^2}{9k^2}\ln^2(k-1).
\end{align*}
For~$k\ge 4$, we have $\ln(k-1)-1 \ge 0.089 \ln(k-1)$. Together with the previous calculation we get
\begin{align*}
  \Pr[G_2\cup\ldots\cup G_{k-1}\mid \overline{F}] & \ge 0.0166\cdot\frac{\ell\cdot \ln(k-1)}{k} - \left(\frac{\ell\cdot \ln(k-1)}{3k}\right)^2\\
  & = \frac{\ell\cdot \ln(k-1)}{k}\cdot\left(0.0166 - \frac{\ell\cdot \ln(k-1)}{9k}\right)\\
  & \ge 0.01\cdot \frac{\ell\cdot \ln(k-1)}{k},
\end{align*}
where we used~$\ell \le \frac{0.05\cdot k}{\ln(k-1)}$ for the last inequality.

Overall we obtain
\begin{align*}
  \Pr[\overline{F}\cap(G_2\cup\ldots\cup G_{k-1})]
  = \Pr[\overline{F}]\cdot\Pr[G_2\cup\ldots\cup G_{k-1}\mid\overline{F}]
  \ge 0.1\cdot 0.01\cdot \frac{\ell\cdot \ln(k-1)}{k} = \Omega\left(\frac{\ell\cdot\log(k)}{k}\right).
\end{align*}
If this event happens, then the costs of the clustering are~$\Omega(k)$. Hence the expected costs of the clustering computed by greedy $k$-means++ are $\Omega(\ell\cdot\log(k))$.

Finally let us consider the case~$\ell > \frac{k}{20\ln(k-1)}$. We argue that in this case the approximation guarantee cannot be better than for~$\ell=\frac{k}{20\ln(k-1)}$. To see that this is true, one has to have a closer look at where the upper bound on~$\ell$ has been used in the argument above. It is used twice: once for proving an upper bound on the conditional probability of~$G_i$ and once for proving an upper bound on the conditional probability of~$G_2\cup\ldots\cup G_{k-1}$. Both these probabilities increase with~$\ell$ so if~$\ell$ is larger one could simply replace it by~$\frac{k}{20\ln(k-1)}$, leading to a lower bound of $\Omega(k/\log(k) \cdot k)=\Omega(k)$ for the approximation guarantee.
\end{proof}

%\input{stable}
%!TEX root = note.tex

\section{Analysis of Noisy \texorpdfstring{$k$}{k}-means++ Seeding}

In this section we analyze a noisy seeding procedure, which we call \emph{noisy $k$-means++} in the following. This procedure iteratively selects~$k$ centers from the data set in a similar fashion as $k$-means++. The only difference is that the probability of sampling a point as the next center is no longer exactly proportional to its squared distance to the closest center chosen so far. The probabilities are only approximately correct. To be more precise, consider an iteration of noisy $k$-means++. For any point~$x\in X$, we denote by~$p_x$ the probability that~$x$ is chosen by $k$-means++ as the next center (i.e., $p$ is the uniform distribution in the first iteration and the distribution that results from $D^2$-sampling in the following iterations). In noisy $k$-means++ an adversary can choose an arbitrary probability distribution~$q$ on~$X$ with~$q_x \in [(1-\eps_1)p_x,(1+\eps_2)p_x]$ for all~$x\in X$ where $\eps_1\in [0,1)$ and $\eps_2\ge 0$ are parameters. Then the next center is sampled according to~$q$. This is repeated in every iteration of noisy $k$-means++ and in every iteration the adversary can decide arbitrarily how to choose~$q$ based on the current distribution~$p$ that results from $D^2$-sampling. We analyze the worst-case approximation guarantee provided by noisy $k$-means++.

Let us first introduce some notation. We denote by~$\Phi(X,C)$ the $k$-means costs of data set $X$ with respect to center set $C$, i.e.,
\[
   \Phi(X,C) = \sum_{x\in X} \min_{c\in C} ||x-c||^2.
\]
For~$c\in\RR^d$ we also write~$\Phi(X,c)$ instead of $\Phi(X,\{c\})$ and similarly for~$x\in\RR^d$ we write~$\Phi(x,C)$ instead of~$\Phi(\{x\},C)$. Let $\OPT_k(X)$ denote the optimal $k$-means costs of dataset $X$. In the following we assume that a data set~$X$ is given and we denote by~$(\Cs{1},\ldots,\Cs{k})$ an optimal $k$-clustering of~$X$. For a finite set~$X\subset\RR^d$, we denote by~$\mu(X)=\frac{1}{|X|}\sum_{x\in X}x$ its mean. The following lemma is well-known.
\begin{lemma}\label{lemma:kMeansCostsFormula}
For any finite $X\subset\RR^d$ and any $z\in\RR^d$,
\[
   \Phi(C,z) = \Phi(C,\mu(C)) + |C|\cdot ||z-\mu(C)||^2 = \OPT_1(C) + |C|\cdot ||z-\mu(C)||^2.
\]
\end{lemma}

We call an optimal cluster~$\Cs{i}$ \emph{covered} by (noisy) $k$-means++ if at least one point from~$\Cs{i}$ is selected as a center. Arthur and Vassilvitskii~\cite{AV07} observe that covered clusters are well approximated by $k$-means++ in expectation. In particular, they show that the expected costs of an optimal cluster~$\Cs{i}$ with respect to the center set computed by $k$-means++ are at most $2\cdot\OPT_1(\Cs{i})$ and $8\cdot\OPT_1(\Cs{i})$ if the cluster is covered in the first or any of the following iterations, respectively. First of all, we carry these observations over to noisy $k$-means++; the following two lemmata are straightforward adaptations of Lemma~3.2 and Lemma~3.3 in \cite{AV07}.

\begin{lemma}\label{lemma:ApproxUniform}
Let~$c_1$ denote the first center chosen by noisy $k$-means++. For each optimal cluster~$\Cs{i}$,
\[
    \E[\Phi(\Cs{i},c_1)\mid c_1\in\Cs{i}] \leq \frac{2(1+\eps_2)}{1-\eps_1}\cdot \OPT_1(\Cs{i}).
\]
\end{lemma}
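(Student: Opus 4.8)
The plan is to follow the proof of Lemma~3.2 in \cite{AV07}, adapting it to the fact that the first center is drawn from a noisy distribution rather than the uniform one. In the first iteration the distribution $p$ is uniform on $X$, so the adversary's distribution $q$ satisfies $q_a\in[(1-\eps_1)/|X|,\,(1+\eps_2)/|X|]$ for every $a\in X$. Conditioning on the event $c_1\in\Cs{i}$ restricts $q$ to $\Cs{i}$ and renormalizes it, hence
\[
  \E[\Phi(\Cs{i},c_1)\mid c_1\in\Cs{i}]
  = \frac{\sum_{a\in\Cs{i}} q_a\,\Phi(\Cs{i},a)}{\sum_{a\in\Cs{i}} q_a}.
\]

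Next I would bound the numerator and denominator separately using the two-sided noise guarantee: the numerator is at most $\frac{1+\eps_2}{|X|}\sum_{a\in\Cs{i}}\Phi(\Cs{i},a)$, while the denominator is at least $\frac{(1-\eps_1)|\Cs{i}|}{|X|}$. The factors $1/|X|$ cancel, leaving the bound $\frac{1+\eps_2}{(1-\eps_1)\,|\Cs{i}|}\sum_{a\in\Cs{i}}\Phi(\Cs{i},a)$.

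It then remains to evaluate $\sum_{a\in\Cs{i}}\Phi(\Cs{i},a)$, which is where Lemma~\ref{lemma:kMeansCostsFormula} enters: applying it with $z=a$ and $C=\Cs{i}$ gives $\Phi(\Cs{i},a)=\OPT_1(\Cs{i})+|\Cs{i}|\cdot\|a-\mu(\Cs{i})\|^2$. Summing over $a\in\Cs{i}$ and using $\sum_{a\in\Cs{i}}\|a-\mu(\Cs{i})\|^2=\OPT_1(\Cs{i})$ yields $\sum_{a\in\Cs{i}}\Phi(\Cs{i},a)=2\,|\Cs{i}|\cdot\OPT_1(\Cs{i})$. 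Substituting this back gives exactly $\frac{2(1+\eps_2)}{1-\eps_1}\OPT_1(\Cs{i})$, as claimed.

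I do not expect a genuine obstacle here — the statement is a direct adaptation of the noiseless case. The only point that needs care is that the noise acts in opposite directions on the two parts of the conditional expectation: it can inflate the numerator by a factor $1+\eps_2$ and deflate the normalizing denominator by a factor $1-\eps_1$, and it is precisely the product of these two worst cases that turns the factor $2$ from \cite{AV07} into $\tfrac{2(1+\eps_2)}{1-\eps_1}$.
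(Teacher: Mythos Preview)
Your proposal is correct and follows essentially the same approach as the paper's proof. The only cosmetic difference is that the paper directly asserts the bound $\Pr[c_1=x\mid c_1\in\Cs{i}]\le \frac{1+\eps_2}{(1-\eps_1)|\Cs{i}|}$ on the conditional probability, while you obtain it by explicitly writing the conditional expectation as a ratio and bounding numerator and denominator separately; the subsequent use of Lemma~\ref{lemma:kMeansCostsFormula} to turn $\sum_{a\in\Cs{i}}\Phi(\Cs{i},a)$ into $2|\Cs{i}|\cdot\OPT_1(\Cs{i})$ is identical.
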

\begin{proof}
In $k$-means++ the first center is chosen uniformly at random, i.e., each point from~$X$ has a probability of~$1/|X|$ of being chosen. In noisy $k$-means++, all points have a probability in~$[(1-\eps_1)/|X|,(1+\eps_2)/|X|]$ of being chosen. Hence, the probability of choosing a point~$x\in\Cs{i}$ as the first center conditioned on the first center being chosen from~$\Cs{i}$ is at most~$\frac{1+\eps_2}{(1-\eps_1)|\Cs{i}|}$. This implies
\begin{align*}
\E[\Phi(\Cs{i},\{c_1\})] &\leq \sum_{c\in \Cs{i}} \frac{1+\eps_2}{(1-\eps_1)|\Cs{i}|}~\Phi(\Cs{i},c)\\
&=\frac{1+\eps_2}{1-\eps_1}\cdot \frac{1}{|\Cs{i}|} \sum_{c\in \Cs{i}} \Phi(\Cs{i},c)\\
&=\frac{1+\eps_2}{1-\eps_1}\cdot \frac{1}{|\Cs{i}|} \sum_{c\in \Cs{i}} (\OPT_1(\Cs{i}) + |\Cs{i}| \cdot||c-\mu(\Cs{1})||^2)~~~~~\text{(Lemma~\ref{lemma:kMeansCostsFormula})} \\
&= \frac{2(1+\eps_2)}{1-\eps_1}\cdot \OPT_1(\Cs{i})\qedhere
\end{align*}
\end{proof}

\begin{lemma}\label{lemma:ApproxProportional} 
Consider an iteration of noisy $k$-means++ after the first one and let $C\neq\emptyset$ denote the current set of centers. We denote by~$z$ the center sampled in the considered iteration. Then for any~$C\neq\emptyset$ and any optimal cluster~$\Cs{i}$,
\[
  \E[\Phi(\Cs{i},C\cup \{z\})\mid C,z\in \Cs{i}]\leq \frac{8(1+\eps_2)}{1-\eps_1}\cdot \OPT_1(\Cs{i}). 
\]
\end{lemma}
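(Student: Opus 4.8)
\emph{Proof sketch (plan).} The plan is to mimic the proof of Lemma~3.2/3.3 in~\cite{AV07}, carefully tracking the single place where the noise enters. Fix the current center set $C$, and for $x\in\RR^d$ write $D(x)^2 := \Phi(x,C) = \min_{c\in C}\|x-c\|^2$; once we condition on $C$ this is a deterministic quantity. First I would pin down the conditional law of the sampled center $z$. In the considered iteration the (adversarial) sampling probability $q_a$ of a point $a$ satisfies $q_a\in[(1-\eps_1)p_a,(1+\eps_2)p_a]$ with $p_a = D(a)^2/\Phi(X,C)$, and $q$ is fixed before the point is drawn, so conditioning on $z\in\Cs{i}$ simply renormalizes $q$ to $\Cs{i}$. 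Hence for $a\in\Cs{i}$,
\[
  \Pr[z=a\mid C,\,z\in\Cs{i}] \;=\; \frac{q_a}{\sum_{a'\in\Cs{i}} q_{a'}} \;\le\; \frac{1+\eps_2}{1-\eps_1}\cdot\frac{D(a)^2}{\sum_{a'\in\Cs{i}}D(a')^2},
\]
and this factor $\tfrac{1+\eps_2}{1-\eps_1}$ is the only deviation from the $D^2$-sampling calculation of~\cite{AV07}.

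Next I would use that after adding $z=a$ to $C$ each point $x\in\Cs{i}$ pays $\min\{D(x)^2,\|x-a\|^2\}$, so the displayed bound gives
\[
  \E[\Phi(\Cs{i},C\cup\{z\})\mid C,\,z\in\Cs{i}] \;\le\; \frac{1+\eps_2}{1-\eps_1}\sum_{a\in\Cs{i}}\frac{D(a)^2}{\sum_{a'\in\Cs{i}}D(a')^2}\sum_{x\in\Cs{i}}\min\{D(x)^2,\|x-a\|^2\}.
\]
Then I would apply the standard trick: by the triangle inequality and $(u+v)^2\le 2u^2+2v^2$ we have $D(a)^2\le 2D(x)^2+2\|x-a\|^2$ for every $x\in\Cs{i}$, so averaging over $x$ yields $D(a)^2\le \tfrac{2}{|\Cs{i}|}\sum_{x\in\Cs{i}}D(x)^2 + \tfrac{2}{|\Cs{i}|}\sum_{x\in\Cs{i}}\|x-a\|^2$. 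Substituting this and using $\sum_{a'\in\Cs{i}}D(a')^2=\sum_{x\in\Cs{i}}D(x)^2$ breaks the right-hand side into two double sums over $a,x\in\Cs{i}$.

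Finally, in the first double sum I would bound $\min\{D(x)^2,\|x-a\|^2\}\le\|x-a\|^2$, and in the second bound $\min\{D(x)^2,\|x-a\|^2\}\le D(x)^2$ so that $\sum_{x}D(x)^2$ cancels the denominator; each of the two resulting terms then equals $\tfrac{2}{|\Cs{i}|}\sum_{a\in\Cs{i}}\sum_{x\in\Cs{i}}\|x-a\|^2 = 4\,\OPT_1(\Cs{i})$, where the last identity $\sum_{a,x\in\Cs{i}}\|x-a\|^2 = 2|\Cs{i}|\cdot\OPT_1(\Cs{i})$ follows from summing Lemma~\ref{lemma:kMeansCostsFormula} over $a\in\Cs{i}$. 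Adding the two contributions gives $\tfrac{8(1+\eps_2)}{1-\eps_1}\OPT_1(\Cs{i})$, as claimed. I do not expect any genuine obstacle here: the only point requiring care is the very first step — because the adversary may deflate some probabilities in $\Cs{i}$ by up to $1-\eps_1$ while inflating others by up to $1+\eps_2$, the worst-case ratio of conditional probabilities is $\tfrac{1+\eps_2}{1-\eps_1}$ (not merely $1+\eps_2$) — after which the argument is the verbatim Arthur--Vassilvitskii computation with that one extra multiplicative factor carried along.
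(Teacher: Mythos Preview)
Your proposal is correct and follows essentially the same route as the paper's own proof: bound the conditional sampling probability by $\tfrac{1+\eps_2}{1-\eps_1}$ times the $D^2$-weight, then reproduce verbatim the Arthur--Vassilvitskii computation (triangle-inequality averaging of $D(a)^2$, splitting into two sums, bounding the $\min$ by one argument in each, and using Lemma~\ref{lemma:kMeansCostsFormula} to evaluate $\sum_{a,x}\|x-a\|^2$). Your explicit remark that the noise enters only in the first step, and that the worst-case conditional ratio is $\tfrac{1+\eps_2}{1-\eps_1}$ rather than $1+\eps_2$, is exactly the point.
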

\begin{proof} 
Conditioned on sampling a point from $\Cs{i}$, the probability of choosing point $x\in\Cs{i}$ as the next center is at most $\frac{1+\eps_2}{1-\eps_1}\cdot \frac{\Phi(x,C)}{\Phi(\Cs{i},C)}$. If~$x$ is chosen as the next center, the costs of any point $p\in\Cs{i}$ become $\min{(\Phi(p,C),||p-x||^2)}$. This implies
\begin{align}
\E[\Phi(\Cs{i},C\cup \{z\}) \mid C,z\in \Cs{i}] \notag
& = \sum_{x\in \Cs{i}} \Pr[z=x\mid C]\cdot\Phi(\Cs{i},C\cup \{x\})\notag\\
&\leq  \frac{1+\eps_2}{1-\eps_1}\cdot \sum_{x\in \Cs{i}}  \frac{\Phi(x,C)}{\Phi(\Cs{i},C)} \sum_{p\in \Cs{i}} \min{(\Phi(p,C),||p-x||^2)}.\label{eqn:PotentialCsi}
\end{align}
For any two points $x,p\in \Cs{i}$, we can write
\[
   \Phi(x,C) = \Big(\min_{c\in C}||x-c||\Big)^2 \le \Big(\min_{c\in C}(||x-p||+||p-c||)\Big)^2
   \le 2\Phi(p,C)+2||x-p||^2.
\]
By summing over all $p$ in $\Cs{i}$, we get 
\[
  \Phi(x,C)\leq \frac{2}{|\Cs{i}|} \sum_{p\in \Cs{i}} \Phi(p,C) + \frac{2}{|\Cs{i}|} \sum_{p \in \Cs{i}} ||x-p||^2.
\]
With~\eqref{eqn:PotentialCsi}, this implies that $\E[\Phi(\Cs{i},C\cup \{z\}) \mid C,z\in \Cs{i}]$ is bounded from above by
\begin{align*}
& \frac{1+\eps_2}{1-\eps_1}\cdot \sum_{x\in \Cs{i}}  \frac{\frac{2}{|\Cs{i}|} \sum_{p\in \Cs{i}} \Phi(p,C) + \frac{2}{|\Cs{i}|} \sum_{p \in \Cs{i}} ||x-p||^2}{\Phi(\Cs{i},C)} \sum_{p\in \Cs{i}} \min{(\Phi(p,C),||p-x||^2)}\\
&= \frac{1+\eps_2}{1-\eps_1}\cdot\sum_{z\in \Cs{i}} \frac{\frac{2}{|\Cs{i}|} \sum_{p\in \Cs{i}} \Phi(p,C)}{\sum_{p\in \Cs{i}} \Phi(p,C)} \sum_{p\in \Cs{i}} \min{(\Phi(p,C),||p-z||^2)} \\
&\quad+ \frac{1+\eps_2}{1-\eps_1}\cdot \sum_{z\in \Cs{i}} \frac{\frac{2}{|\Cs{i}|} \sum_{p \in \Cs{i}} ||p-z||^2}{\sum_{p\in \Cs{i}} \Phi(p,C)} \sum_{p\in \Cs{i}} \min{(\Phi(p,C),||p-z||^2)} \\
&\leq  \frac{1+\eps_2}{1-\eps_1}\cdot \sum_{z\in \Cs{i}} \frac{2}{|\Cs{i}|} \sum_{p\in \Cs{i}} ||p-z||^2 + \frac{1+\eps_2}{1-\eps_1}\cdot \sum_{z\in \Cs{i}} \frac{2}{|\Cs{i}|} \sum_{p \in \Cs{i}} ||p-z||^2 \\
&= \frac{4(1+\eps_2)}{1-\eps_1}\cdot \sum_{z\in \Cs{i}} \frac{1}{|\Cs{i}|} \sum_{p\in \Cs{i}} ||p-z||^2 \\
&= \frac{4(1+\eps_2)}{1-\eps_1}\cdot \sum_{z\in \Cs{i}} \frac{1}{|\Cs{i}|} (\OPT_1(\Cs{i})+|\Cs{i}|\cdot ||z-\mu(\Cs{i})||^2)~~~~~\text{(Lemma~\ref{lemma:kMeansCostsFormula})}\\
&= \frac{8(1+\eps_2)}{1-\eps_1}\cdot \OPT_1(\Cs{i})\qedhere
\end{align*}
\end{proof}

Consider a run of noisy $k$-means++. For~$t\in[k]$, let $H_t$ and $U_t$ denote the set of all points from~$X$ that belong after iteration~$i$ to covered and uncovered optimal clusters, respectively. Let~$u_t$ denote the number of uncovered clusters after iteration~$t$. Furthermore let~$T_t$ denote the set of centers chosen by noisy $k$-means++ in the first~$t$ iterations. We say that iteration~$t$ is \emph{wasted} if the center chosen in iteration~$t$ comes from~$H_{t-1}$, i.e., if in iteration~$t$ no uncovered cluster becomes covered.

\begin{cor}\label{cor:CostsCovered}
For any $t\in[k]$,
\[
  \Ex{\Phi(H_t,T_t)} \le \frac{8(1+\eps_2)}{1-\eps_1}\cdot \OPT_k(X)
\]
\end{cor}
\begin{proof}
Using Lemma~\ref{lemma:ApproxUniform} and Lemma~\ref{lemma:ApproxProportional} we obtain
\begin{align*}
  \Ex{\Phi(H_t,T_t)} & = \sum_{i=1}^k\Pr[\Cs{i}\subseteq H_t]\cdot \Ex{\Phi(\Cs{i},T_t)\mid \Cs{i}\subseteq H_t}\\
  & \le \sum_{i=1}^k\Pr[\Cs{i}\subseteq H_t]\cdot \frac{8(1+\eps_2)}{1-\eps_1}\cdot \OPT_1(\Cs{i})\\
  & \le \frac{8(1+\eps_2)}{1-\eps_1}\cdot\sum_{i=1}^k \OPT_1(\Cs{i})\\
  & = \frac{8(1+\eps_2)}{1-\eps_1}\cdot\OPT_k(X).\qedhere
\end{align*}
\end{proof}

Corollary~\ref{cor:CostsCovered} implies that the covered clusters contribute in expectation at most~$O(\OPT_k(X))$ to the costs of the solution computed by noisy $k$-means++ (assuming~$\eps_1$ and~$\eps_2$ to be constants). The not straightforward part is to prove an upper bound for the costs of the clusters that are not covered by noisy $k$-means++. For this, we adapt the analysis of $k$-means++ due to Dasgupta~\cite{dasgupta}. This analysis is based on a potential function that accumulates costs in every wasted iteration. The potential function has the properties that the expected value of the potential function in the end can be bounded and that the total costs accumulated are in expectation at least the costs of the uncovered clusters in the end. 

Dasgupta uses crucially that the expected average costs of the uncovered clusters do not increase in $k$-means++. For noisy $k$-means++ this is not true anymore in general. Hence, we have to adapt the potential function and the analysis. We define~$W_i=1$ if iteration~$i$ is wasted and~$W_i=0$ otherwise. Then the potential is defined as
\[
   \Psi_k = \sum_{i=2}^k W_i\cdot \frac{\Phi(U_{i},T_{i})}{u_{i}} .
\]

\begin{lemma}\label{lemma:PsiUpper}
It holds
\[
  \Ex{\Psi_k} \le \frac{8(1+\eps_2)^2}{(1-\eps_1)^2} \cdot(\ln(k)+1) \cdot \OPT_k(X).
\]
\end{lemma}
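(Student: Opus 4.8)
The plan is to estimate $\Ex{\Psi_k}=\sum_{i=2}^{k}\Ex{W_i\cdot\Phi(U_i,T_i)/u_i}$ summand by summand, conditioning on the history $\calF_{i-1}$ of the first $i-1$ iterations. The starting point is a monotonicity observation about wasted iterations: if iteration $i$ is wasted, its center $c_i$ lies in an already covered cluster, so no uncovered cluster becomes covered and $U_i=U_{i-1}$, $u_i=u_{i-1}$, while adding $c_i$ to the center set can only lower the cost of the remaining points, i.e.\ $\Phi(U_i,T_i)=\Phi(U_{i-1},T_{i-1}\cup\{c_i\})\le\Phi(U_{i-1},T_{i-1})$. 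Hence $W_i\cdot\Phi(U_i,T_i)/u_i\le W_i\cdot\Phi(U_{i-1},T_{i-1})/u_{i-1}$ pointwise, and since the second fraction is $\calF_{i-1}$-measurable, $\Ex{W_i\cdot\Phi(U_i,T_i)/u_i\mid\calF_{i-1}}\le\frac{\Phi(U_{i-1},T_{i-1})}{u_{i-1}}\cdot\Pr[W_i=1\mid\calF_{i-1}]$.

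Next I would bound the probability of a wasted iteration. In iteration $i$ the $D^2$-sampling distribution $p$ puts total mass $\Phi(H_{i-1},T_{i-1})/\Phi(X,T_{i-1})$ on the covered points, and the adversary's distribution $q$ satisfies $q_x\le(1+\eps_2)p_x$, so $\Pr[W_i=1\mid\calF_{i-1}]=\sum_{x\in H_{i-1}}q_x\le(1+\eps_2)\,\Phi(H_{i-1},T_{i-1})/\Phi(X,T_{i-1})$. Substituting and taking expectations yields
\[
  \Ex{\Psi_k}\le(1+\eps_2)\sum_{i=2}^{k}\Ex{\frac{\Phi(U_{i-1},T_{i-1})\cdot\Phi(H_{i-1},T_{i-1})}{u_{i-1}\cdot\Phi(X,T_{i-1})}}.
\]

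The crucial step is then to exploit that every point lies in a covered or an uncovered cluster, so $\Phi(X,T_{i-1})=\Phi(H_{i-1},T_{i-1})+\Phi(U_{i-1},T_{i-1})$, whence $\frac{ab}{a+b}\le b$ gives $\frac{\Phi(U_{i-1},T_{i-1})\,\Phi(H_{i-1},T_{i-1})}{\Phi(X,T_{i-1})}\le\Phi(H_{i-1},T_{i-1})$. This trades the factor $\Phi(U_{i-1},T_{i-1})$ — which is not bounded in terms of $\OPT_k(X)$ — for $\Phi(H_{i-1},T_{i-1})$, whose expectation is controlled by Corollary~\ref{cor:CostsCovered}. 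Using in addition the deterministic bound $u_{i-1}\ge k-i+1$ (after $i-1$ iterations at most $i-1$ clusters are covered), each summand is at most $\frac{1}{k-i+1}\Ex{\Phi(H_{i-1},T_{i-1})}\le\frac{1}{k-i+1}\cdot\frac{8(1+\eps_2)}{1-\eps_1}\OPT_k(X)$, and $\sum_{i=2}^{k}\frac{1}{k-i+1}=\sum_{j=1}^{k-1}\frac1j\le\ln k+1$ finishes the bound (this in fact yields the bound with $1-\eps_1$ rather than $(1-\eps_1)^2$ in the denominator, which only strengthens it).

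I expect the main obstacle to be identifying the right quantity to bound in the middle two steps: controlling $\Phi(U_{i-1},T_{i-1})$ on its own is hopeless, and the argument only goes through once one notices that the wasted-step probability supplies exactly the compensating factor $\Phi(H_{i-1},T_{i-1})/\Phi(X,T_{i-1})$ that makes the per-iteration contribution summable — this is also where the adaptation of Dasgupta's analysis to the noisy setting departs from the original (through both the $(1+\eps_2)$ in the wasted probability and the $(1+\eps_2)/(1-\eps_1)$ in Corollary~\ref{cor:CostsCovered}). The remainder is bookkeeping: the degenerate case $u_i=0$, where necessarily $W_i=0$ and $\Phi(U_i,T_i)=0$ so the term vanishes under the convention $0/0=0$, and the harmless assumption $\Phi(X,T_{i-1})>0$, since otherwise the computed solution already has cost zero.
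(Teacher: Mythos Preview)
Your argument is correct and follows essentially the same route as the paper: condition on the history, use that a wasted step cannot increase $\Phi(U_{i-1},T_{i-1})/u_{i-1}$, bound the wasted probability via the $D^2$-mass on $H_{i-1}$, cancel to obtain $\Phi(H_{i-1},T_{i-1})/u_{i-1}$, apply Corollary~\ref{cor:CostsCovered}, and sum the harmonic series. The only (cosmetic) difference is that the paper bounds $\Pr[W_i=1\mid\calF_{i-1}]\le\frac{(1+\eps_2)\Phi(H_{i-1},T_{i-1})}{(1-\eps_1)\Phi(U_{i-1},T_{i-1})}$ so that the factor $\Phi(U_{i-1},T_{i-1})$ cancels directly, whereas you keep the tighter denominator $\Phi(X,T_{i-1})$ and then invoke $\frac{ab}{a+b}\le b$; this is why you end up saving one factor of $1/(1-\eps_1)$ relative to the stated bound.
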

\begin{proof}
Let~$i\in\{2,\ldots,t\}$. In the following calculation we sum over all realizations~$\calF_{i-1}$ of the first $i-1$ iterations of noisy $k$-means++. Any realization~$\calF_{i-1}$ determines the value of~$\Phi(U_{i-1},T_{i-1})$ and~$u_{i-1}$. We use the notation~$[\ldots]_{\calF_{i-1}}$ to express that all terms inside the brackets take the values determined by~$\calF_{i-1}$. Then
\begin{align*}
  \Ex{W_i\cdot \frac{\Phi(U_{i},T_{i})}{u_{i}}}
  & = \sum_{\calF_{i-1}} \Pr[\calF_{i-1}] \cdot \Ex{W_i\cdot \frac{\Phi(U_{i},T_{i})}{u_{i}}\Bigm|\calF_{i-1}}\\
  & = \sum_{\calF_{i-1}} \Pr[\calF_{i-1}] \cdot \Pr[W_i=1\mid\calF_{i-1}]\cdot \Ex{\frac{\Phi(U_{i},T_{i})}{u_{i}}\Bigm|\calF_{i-1}\cap (W_i=1)}
\end{align*}
Since under the condition that iteration~$i$ is wasted the average costs of the uncovered clusters cannot increase, we can upper bound the term above by
\begin{align*}
  &  \sum_{\calF_{i-1}} \Pr[\calF_{i-1}] \cdot \Pr[W_i=1\mid\calF_{i-1}]\cdot \left[\frac{\Phi(U_{i-1},T_{i-1})}{u_{i-1}}\right]_{\calF_{i-1}} \\
  & \le \sum_{\calF_{i-1}} \Pr[\calF_{i-1}] \cdot \left[\frac{(1+\eps_2)\Phi(H_{i-1},T_{i-1})}{(1-\eps_1)\Phi(U_{i-1},T_{i-1})}\cdot \frac{\Phi(U_{i-1},T_{i-1})}{u_{i-1}}\right]_{\calF_{i-1}}\\
  & = \frac{1+\eps_2}{1-\eps_1}\cdot \sum_{\calF_{i-1}} \Pr[\calF_{i-1}] \cdot \left[\frac{\Phi(H_{i-1},T_{i-1})}{u_{i-1}}\right]_{\calF_{i-1}}\\
  & \le \frac{1+\eps_2}{1-\eps_1}\cdot \sum_{\calF_{i-1}} \Pr[\calF_{i-1}] \cdot \left[\frac{\Phi(H_{i-1},T_{i-1})}{k-i+1}\right]_{\calF_{i-1}}\\
  & = \frac{1+\eps_2}{1-\eps_1}\cdot \frac{\Ex{\Phi(H_{i-1},T_{i-1})}}{k-i+1}.
\end{align*}
This implies
\begin{align*}
   \Ex{\Psi_k} & = \sum_{i=2}^k \Ex{W_i\cdot \frac{\Phi(U_{i},T_{i})}{u_{i}}}\\
     & \le \frac{1+\eps_2}{1-\eps_1}\cdot \sum_{i=2}^k \frac{\Ex{\Phi(H_{i-1},T_{i-1})}}{k-i+1}.
\end{align*}
With Corollary~\ref{cor:CostsCovered} this yields
\[
  \Ex{\Psi_k} \le \frac{8(1+\eps_2)^2}{(1-\eps_1)^2}\cdot \OPT_k(X) \sum_{i=2}^k \frac{1}{k-i+1}
              \le \frac{8(1+\eps_2)^2}{(1-\eps_1)^2}\cdot (\ln(k)+1)\cdot \OPT_k(X).\qedhere
\]
\end{proof}

Observe that we increase the potential in each wasted iteration by the current average costs of the uncovered clusters. We have to make sure that in the end we have paid enough, namely at least the costs of the uncovered clusters remaining after $k$ iterations. If the average costs of the uncovered clusters do not increase then this is the case because in the end the number~$u_k$ of uncovered clusters equals the number of wasted iterations. Hence, if the average does not increase, then we have paid~$u_k$ times at least the final average~$\Phi(U_k,T_k)/u_k$ and thus in total at least~$\Phi(U_k,T_k)$. The following lemma, whose proof will be given further below, states that in expectation the average costs of the uncovered clusters can only increase by a logarithmic factor.

\begin{restatable}{lem}{lemIncAverage}\label{lemma:IncreaseAverage}
Set $D(\eps_1,\eps_2, \ln k) = \max\{18,\frac{24(\eps_1+\eps_2)(1+\eps_2)}{(1-\eps_1)^2}\}\cdot \ln k$.
If $k \ge D(\eps_1,\eps_2,\ln k)$, then set~$B(\eps_1,\eps_2,\ln k)=\frac{4(1+\eps_2)}{1-\eps_1} \cdot \ln(k)+2$, and otherwise, set $B(\eps_1,\eps_2,\ln k) = D(\eps_1,\eps_2,\ln k)$.
Then for any~$i\in[k]$ and any realization~$\calF_i$ of the first~$i$ iterations
\[
   \Ex{\frac{\Phi(U_k,T_k)}{u_k}\Bigm| \calF_i} \le B(\eps_1,\eps_2, \ln k) \cdot \left[\frac{\Phi(U_i,T_i)}{u_i}\right]_{\calF_i}.
\]
\end{restatable}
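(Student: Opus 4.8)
The plan is to track the quantity $A_i := \Phi(U_i,T_i)/u_i$, the average cost of an uncovered cluster, across iterations $i, i+1, \ldots, k$ and show that it grows by at most a logarithmic factor in expectation. The first step is to understand one step of this process. Condition on a realization $\calF_j$ of the first $j$ iterations, for $j \ge i$. In iteration $j+1$ one of three things happens: (a) the new center lands in a covered cluster (the iteration is wasted) — then $U_{j+1}=U_j$, $u_{j+1}=u_j$, and by the averaging argument already used inside the proof of Lemma~\ref{lemma:PsiUpper} the conditional expectation of $A_{j+1}$ is at most $A_j$; (b) the new center lands in an uncovered cluster $\Cs{m}$ and ``cheaply'' removes its whole cost — then $u_{j+1}=u_j-1$ and the numerator drops by $\Phi(\Cs{m},T_j)$; but now we are dividing a possibly-not-much-smaller numerator by a smaller denominator, so $A_{j+1}$ can \emph{increase}; (c) similarly if the new center is from an uncovered cluster but does not zero it out (it only lowers $\Cs{m}$'s cost to the value from Lemma~\ref{lemma:ApproxProportional}). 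The key estimate is: conditioned on $\calF_j$ and on iteration $j+1$ covering a new cluster, the expected new numerator $\Ex{\Phi(U_{j+1},T_{j+1})\mid \calF_j, \text{new cluster covered}}$ is at most $\Phi(U_j,T_j) \cdot\bigl(1 - \tfrac{c}{u_j}\bigr)$ plus a lower-order error coming from the noise factor $\tfrac{1+\eps_2}{1-\eps_1}$ and from the cost Lemma~\ref{lemma:ApproxProportional} leaves behind; here $c$ is a constant close to $1$ (in the noiseless case it is exactly related to the fact that a $D^2$-sampled point is chosen proportional to cost, so the expected removed cost is the average cost, and the $8(1+\eps_2)/(1-\eps_1)$ residual from Lemma~\ref{lemma:ApproxProportional} contributes the error). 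Dividing by $u_{j+1} = u_j-1$ gives
\[
   \Ex{A_{j+1}\mid \calF_j} \le A_j \cdot \frac{u_j - c}{u_j - 1} + (\text{error term scaled by }1/(u_j-1)).
\]

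The second step is to chain these one-step bounds from $i$ up to $k$. Because $u_j$ decreases by at most one per covered iteration and the dangerous multiplicative factor $\tfrac{u_j-c}{u_j-1}$ is only mildly larger than $1$ when $u_j$ is large but blows up as $u_j \to 1$, the natural thing is to split the telescoping at the threshold where $u_j$ becomes small — roughly $u_j = \Theta(\ln k)$ — exactly as in Dasgupta's analysis. For the ``$u_j$ large'' regime, the product $\prod_{u=u_k'}^{u_i}\tfrac{u-c}{u-1}$ telescopes to something like $(u_i/u_j)^{\Theta(1)} \le u_i^{\Theta(1)}$, and since $u_i \le k$ this is $k^{O(1)}$ — but we want only a $\log$ factor, so more care is needed: one should instead track $\Ex{\Phi(U_j,T_j)}$ directly (not the ratio) in this regime, where the recursion $\Ex{\Phi(U_{j+1},T_{j+1})} \le \Ex{\Phi(U_j,T_j)}\cdot(1-c/u_j) + \text{error}$ shows the numerator is \emph{non-increasing up to the accumulated error}, and the $\log$ arises only from summing the errors $\sum 1/(u_j-1)$ or from the final small-$u$ regime. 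For the ``$u_j$ small'' regime (when $u_j \le D(\eps_1,\eps_2,\ln k)/\text{const}$), one simply bounds crudely: each of the at most $O(\ln k)$ remaining covered iterations multiplies the average by at most a constant factor depending on $\eps_1,\eps_2$, giving the $B(\eps_1,\eps_2,\ln k)$ of the two stated forms — the first form $\tfrac{4(1+\eps_2)}{1-\eps_1}\ln k + 2$ in the regime $k \ge D$, and the blunter bound $B = D$ otherwise. The hypothesis $k \ge D(\eps_1,\eps_2,\ln k)$ is precisely what guarantees the error terms collected in the large-$u$ regime stay $O(A_i \cdot \ln k)$ rather than dominating.

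The main obstacle — and the reason this lemma is the technical heart of the paper — is case (b)/(c): controlling the increase of the \emph{average} when the denominator shrinks. In the noiseless $k$-means++ analysis the expected removed cost is exactly the average $A_j$ (because $D^2$-sampling picks a cluster with probability proportional to its cost, then the cost of that cluster is paid down to a constant times its $1$-means optimum), so the numerator drops by essentially $A_j$ and $A_{j+1}$ stays controlled. Under noise, the probability of picking cluster $\Cs{m}$ is only within $[(1-\eps_1),(1+\eps_2)]$ of proportional, so the expected removed cost is between $(1-\eps_1)A_j$ and $(1+\eps_2)A_j$ (roughly), and the residual cost from Lemma~\ref{lemma:ApproxProportional} is $\tfrac{8(1+\eps_2)}{1-\eps_1}\OPT_1(\Cs{m})$ which need not be small relative to $\Phi(\Cs{m},T_j)$. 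So I would need to separately argue that clusters whose current cost is already comparable to their $1$-means optimum contribute negligibly (they are ``almost covered'' and can be absorbed into the Corollary~\ref{cor:CostsCovered} bound), while for the remaining clusters the $(1-\eps_1)$ multiplicative decrease is genuine. Balancing the $(\eps_1+\eps_2)$ error against the $1/u_j$ telescoping is exactly where the constant $\tfrac{24(\eps_1+\eps_2)(1+\eps_2)}{(1-\eps_1)^2}$ in $D(\eps_1,\eps_2,\ln k)$ comes from: it is the smallest $u_j$ for which the error term is dominated by the main term, and below that threshold we fall back on the crude constant-factor-per-iteration bound. I expect the bookkeeping — keeping track of which conditional expectation is taken over what, and summing the per-iteration errors — to be the most delicate part, but conceptually it is a noisy perturbation of Dasgupta's argument split at a threshold $u = \Theta(\ln k)$.
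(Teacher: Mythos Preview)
Your proposal has a genuine gap: the one-step expectation recursion you set up is too weak to yield a logarithmic factor. Concretely, your bound $\Ex{A_{j+1}\mid\calF_j,\text{cover}}\le A_j\cdot\frac{u_j-c}{u_j-1}$ with $c=\frac{1-\eps_1}{1+\eps_2}$ is correct, but telescoping it from $u_i$ down to $1$ gives
\[
   \prod_{u=2}^{u_i}\frac{u-c}{u-1} \;=\; \frac{\Gamma(u_i+1-c)}{\Gamma(2-c)\,\Gamma(u_i)} \;=\; \Theta\bigl(u_i^{\,1-c}\bigr),
\]
which is polynomial in $k$, not logarithmic. You notice this (``$k^{O(1)}$ --- but we want only a $\log$ factor'') and propose to track the numerator $\Phi(U_j,T_j)$ instead and split at a threshold $u\approx\ln k$. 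This does not help: the numerator recursion $\Ex{\Phi(U_{j+1},T_{j+1})}\le\Phi(U_j,T_j)(1-c/u_j)$ telescopes to a factor $\approx(m/u_i)^c$ when $u$ drops from $u_i$ to $m$, so the average at the threshold $m=\Theta(\ln k)$ is still $A_i\cdot(u_i/m)^{1-c}$, polynomial in $u_i$. And in the small-$u$ regime, ``constant factor per iteration for $O(\ln k)$ iterations'' is again $k^{O(1)}$. The underlying reason is that a pure conditional-expectation recursion decouples the steps and allows the adversary to be worst-case independently in each; it cannot exploit the fact that if the adversary keeps deleting cheap clusters, the surviving ones are all expensive and hence get hit quickly by proportional sampling.

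The paper's argument is quite different and supplies exactly this missing idea. It abstracts the situation to: numbers $a_1,\dots,a_z$ with average~$1$; in each step, with probability $\eps=\frac{\eps_1+\eps_2}{1+\eps_2}$ an adversary deletes one, otherwise a number is deleted by proportional sampling; the adversary may also decrease values. For $\ell<z/2$ deletions the average at most doubles trivially (this is your ``large-$u$'' regime, handled much more simply than you suggest). For $\ell\ge z/2$ the paper uses a \emph{high-probability} argument: a Chernoff bound shows that with probability $\ge 1-1/z$ at least $\frac{1-\eps}{4}z$ of the steps are proportional, and then any number whose final value exceeds $\frac{8}{1-\eps}\ln z$ survives all those proportional steps with probability at most $1/z^2$; a union bound over the $z$ numbers finishes it. The constant $\frac{24(\eps_1+\eps_2)(1+\eps_2)}{(1-\eps_1)^2}$ arises from the Chernoff tail, not from accumulated per-step errors. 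Also, note that the residual cost from Lemma~\ref{lemma:ApproxProportional} plays no role in this lemma: once cluster $\Cs{m}$ is covered it simply leaves $U$, and its leftover cost is accounted for in $H$ via Corollary~\ref{cor:CostsCovered}, so your case~(c) and the associated ``error terms'' are not part of this argument.
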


Next we relate the potential function to the costs of an optimal clustering.

\begin{lemma}\label{lemma:PsiLower}
Define $B(\eps_1,\eps_2,\ln k)$ as in Lemma~\ref{lemma:IncreaseAverage}.
Then
\[
  \Ex{\Psi_k} \ge \frac{\Ex{\Phi(U_k,T_k)}}{B(\eps_1,\eps_2,\ln k)}.
\]
\end{lemma}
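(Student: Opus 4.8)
\textbf{Proof plan for Lemma~\ref{lemma:PsiLower}.}
The plan is to follow the potential-function argument of Dasgupta, correcting for the fact that the average uncovered cost can grow over time by using the controlled growth supplied by Lemma~\ref{lemma:IncreaseAverage}. The starting point is a deterministic accounting identity. Iteration~$1$ always covers exactly one optimal cluster, so $u_1=k-1$, and thereafter each non-wasted iteration covers one new cluster while each wasted iteration covers none. Since $\{2,\dots,k\}$ consists of $k-1$ iterations, $\sum_{i=2}^{k}W_i$ of which are wasted, this gives $u_k=(k-1)-\big((k-1)-\sum_{i=2}^{k}W_i\big)=\sum_{i=2}^{k}W_i$. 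Observe also that after iteration~$i$ at most $i$ clusters are covered, so $u_i\ge 1$ for every $i\le k-1$; and if $W_i=1$ for some $i$ then $u_k=\sum_{j}W_j\ge 1$. Hence on the event $\{W_i=1\}$ all the ratios $\Phi(U_i,T_i)/u_i$ and $\Phi(U_k,T_k)/u_k$ used below are well defined, while on $\{W_i=0\}$ every term in which they occur carries a factor $W_i=0$ and thus vanishes; in particular the degenerate case $u_k=0$, in which every $W_i$ is $0$, is harmless.

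Next I would prove, for each fixed $i\in\{2,\dots,k\}$, the one-iteration inequality
\[
  \Ex{W_i\cdot\frac{\Phi(U_i,T_i)}{u_i}}\ \ge\ \frac{1}{B(\eps_1,\eps_2,\ln k)}\cdot\Ex{W_i\cdot\frac{\Phi(U_k,T_k)}{u_k}}.
\]
To see this I sum over all realizations $\calF_i$ of the first $i$ iterations. Both $W_i$ and $\Phi(U_i,T_i)/u_i$ are determined by $\calF_i$, so the left-hand side equals $\sum_{\calF_i}\Pr[\calF_i]\,W_i(\calF_i)\,[\Phi(U_i,T_i)/u_i]_{\calF_i}$. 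For every $\calF_i$ with $W_i(\calF_i)=1$, Lemma~\ref{lemma:IncreaseAverage} yields $[\Phi(U_i,T_i)/u_i]_{\calF_i}\ge \tfrac1B\,\Ex{\Phi(U_k,T_k)/u_k\mid\calF_i}$, and for $\calF_i$ with $W_i(\calF_i)=0$ the summand is $0$ on both sides. Substituting and then re-folding the sum via the tower property (legitimate since $W_i$ is $\calF_i$-measurable) gives the displayed inequality.

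Finally I would sum over $i=2,\dots,k$ and apply the identity $\sum_{i=2}^{k}W_i=u_k$:
\[
  \Ex{\Psi_k}=\sum_{i=2}^{k}\Ex{W_i\cdot\frac{\Phi(U_i,T_i)}{u_i}}\ \ge\ \frac{1}{B(\eps_1,\eps_2,\ln k)}\,\Ex{\Big(\sum_{i=2}^{k}W_i\Big)\frac{\Phi(U_k,T_k)}{u_k}}=\frac{1}{B(\eps_1,\eps_2,\ln k)}\,\Ex{\Phi(U_k,T_k)},
\]
which is exactly the claim. Essentially all of the real content is outsourced to Lemma~\ref{lemma:IncreaseAverage} (whose proof is deferred); the only thing needing genuine care here is the bookkeeping around wasted iterations and the $0/0$ conventions, i.e.\ confirming that $W_i=1$ forces the relevant denominators to be positive so the conditional-expectation manipulations are valid. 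Beyond linearity of expectation and the tower property, no probabilistic inequality is needed.
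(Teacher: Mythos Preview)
Your proposal is correct and follows essentially the same argument as the paper: establish the per-iteration inequality $\Ex{W_i\,\Phi(U_i,T_i)/u_i}\ge B^{-1}\,\Ex{W_i\,\Phi(U_k,T_k)/u_k}$ by conditioning on the realization $\calF_i$ and invoking Lemma~\ref{lemma:IncreaseAverage}, then sum over $i$ and use $\sum_{i=2}^k W_i=u_k$. Your treatment of the degenerate cases (well-definedness of the ratios when $W_i=1$, the $u_k=0$ case) is a bit more explicit than the paper's, but otherwise the proofs coincide.
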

\begin{proof}
For any~$i\in\{2,\ldots,k\}$, we obtain using Lemma~\ref{lemma:IncreaseAverage}
\begin{align*}
  \Ex{W_i\cdot \frac{\Phi(U_{i},T_i)}{u_{i}}}
  & = \sum_{\calF_{i},[W_i]_{\calF_i}=1} \Pr[\calF_i]\cdot\left[\frac{\Phi(U_{i},T_i)}{u_{i}}\right]_{\calF_{i}}\\
  & \ge \sum_{\calF_{i},[W_i]_{\calF_i}=1} \Pr[\calF_i]\cdot\frac{1}{B(\eps_1,\eps_2,\ln k)}\cdot \Ex{\frac{\Phi(U_k,T_k)}{u_k}\Bigm| \calF_i}\\
  & = \frac{1}{B(\eps_1,\eps_2,\ln k)}\cdot\sum_{\calF_{i}} \Pr[\calF_i]\cdot \Ex{W_i\cdot\frac{\Phi(U_k,T_k)}{u_k}\Bigm| \calF_i}\\
  & = \frac{1}{B(\eps_1,\eps_2,\ln k)}\cdot \Ex{W_i\cdot \frac{\Phi(U_k,T_k)}{u_k}}.
\end{align*}
Hence,
\begin{align*}
  \Ex{\Psi_k} & = \sum_{i=2}^k \Ex{W_i\cdot \frac{\Phi(U_{i},T_{i})}{u_{i}}}\\
  & \ge \frac{1}{B(\eps_1,\eps_2,\ln k)}\cdot\sum_{i=2}^k \Ex{W_i\cdot \frac{\Phi(U_k,T_k)}{u_k}}\\
  & = \frac{1}{B(\eps_1,\eps_2,\ln k)}\cdot\Ex{\left(\sum_{i=2}^k W_i\right)\cdot \frac{\Phi(U_k,T_k)}{u_{k}}}\\
  & = \frac{1}{B(\eps_1,\eps_2,\ln k)}\cdot\Ex{u_k\cdot \frac{\Phi(U_k,T_k)}{u_{k}}}\\
  & = \frac{\Ex{\Phi(U_k,T_k)}}{B(\eps_1,\eps_2,\ln k)}\qedhere
\end{align*}
\end{proof}

\thmNoisy*
\begin{proof}
For $k \ge \max\{18,\frac{24(\eps_1+\eps_2)(1+\eps_2)}{(1-\eps_1)^2}\}\cdot \ln k$, Lemma~\ref{lemma:PsiUpper} and Lemma~\ref{lemma:PsiLower} imply
\begin{align*}
  \Ex{\Phi(U_k,T_k)} & \le B(\eps_1,\eps_2,\ln k)\cdot \Ex{\Psi_k}\\
  & \le B(\eps_1,\eps_2,\ln k)\cdot \frac{8(1+\eps_2)^2}{(1-\eps_1)^2} \cdot(\ln(k)+1) \cdot \OPT_k(X)\\
  & = O\left(\frac{(1+\eps_2)^3}{(1-\eps_1)^3}\cdot \log^2(k)\cdot \OPT_k(X)\right).
\end{align*}
With Corollary~\ref{cor:CostsCovered} this implies
\begin{align*}
  \Ex{\Phi(H_k,T_k) + \Phi(U_k,T_k)} &\le 
  O\left(\frac{1+\eps_2}{1-\eps_1}\cdot \OPT_k(X)\right)
  + O\left(\frac{(1+\eps_2)^3}{(1-\eps_1)^3}\cdot \log^2(k)\cdot \OPT_k(X)\right)\\
  & \le O\left(\frac{(1+\eps_2)^3}{(1-\eps_1)^3}\cdot \log^2(k)\cdot \OPT_k(X)\right)
\end{align*}
For $k \le \max\{18,\frac{24(\eps_1+\eps_2)(1+\eps_2)}{(1-\eps_1)^2}\}\cdot \ln k$, we get
\begin{align*}
  &  B(\eps_1,\eps_2,\ln k)\cdot \frac{8(1+\eps_2)^2}{(1-\eps_1)^2} \cdot(\ln(k)+1) \cdot \OPT_k(X)\\
  & = O\left(\frac{(1+\eps_2)^4}{(1-\eps_1)^4}\cdot \log^2(k)\cdot \OPT_k(X)\right),
\end{align*}
where we use that $\eps_1 < 1$, so $\eps_1+\eps_2 \le 1+\eps_2$.
This implies
\begin{align*}
  \Ex{\Phi(H_k,T_k) + \Phi(U_k,T_k)} & \le O\left(\frac{(1+\eps_2)^4}{(1-\eps_1)^4}\cdot \log^2(k)\cdot \OPT_k(X)\right)\\
	& \le O\left(\frac{(1+\eps_2)^4}{(1-\eps_1)^4}\cdot \log^2 \left(\frac{(1+\eps_2)^2}{(1-\eps_1)^2}\right)\cdot \OPT_k(X)\right)\\
	& = O\left(\left(\frac{1+\eps_2}{1-\eps_1}\right)^4\cdot \log^2 \left(\frac{1+\eps_2}{1-\eps_1}\right) \cdot \OPT_k(X)\right),
\end{align*}
	where we use for the second inequality that either $\sqrt{k} \le \frac{k}{\ln k} \le 18$ and then $\ln k \le O(1)$, or
	\begin{align*}
	\sqrt{k} \le \frac{k}{\ln k} \le \frac{24(\eps_1+\eps_2)(1+\eps_2)}{(1-\eps_1)^2}
	& \Rightarrow \ln \sqrt{k} \le \ln \left(\frac{24(\eps_1+\eps_2)(1+\eps_2)}{(1-\eps_1)^2}\right)\\
	& \Rightarrow \log k \le O\left(\log \left(\frac{(1+\eps_2)^2}{(1-\eps_1)^2}\right)\right).\qedhere
	\end{align*}
\end{proof}

It remains to show that the average potential of uncovered clusters increases by at most a logarithmic multiplicative factor, i.e., to prove Lemma~\ref{lemma:IncreaseAverage}. We first consider the following abstract random experiment whose connection to noisy $k$-means++ we discuss in the actual proof of Lemma~\ref{lemma:IncreaseAverage} below. Let $a_1,\ldots,a_z\in \RR_{\ge 0}$ denote numbers with average value~$1$. Since there are $z$ numbers with the average equal to one, their sum equals~$z$. We assume that in each step of our experiment with probability~$\eps\in[0,1)$ an adversary chooses one of the numbers to be removed and with probability~$1-\eps$ a number is removed by proportional sampling (i.e., if number~$a_i$ still exists then it is removed with probability $a_i/S$, where~$S$ denotes the sum of the remaining numbers). Note that in this process the number $a_i$ is sampled with probability at least $(1-\eps)\frac{a_i}{S}$. Additionally after each step an adversary can arbitrarily lower the value of some numbers. This process is run for~$\ell$ steps and we are interested in an upper bound for the expected average of the numbers remaining after these $\ell$ steps. We denote this average by~$A_{\ell}$.

\begin{lemma}\label{lem:LargeL}
Let~$\eps\in(0,1)$, assume that~$\frac{z}{\ln z} \ge \max\{18,\frac{24\eps}{(1-\eps)^2}\}$, and~$\ell\ge z/2$. Then~$\Ex{A_{\ell}} \le \frac{4}{1-\eps} \cdot \ln(z)+2$.
For $z \le \max\{18,\frac{24\eps}{(1-\eps)^2}\} \cdot \ln z$, we observe that $\Ex{A_{\ell}} \le z \le \max\{18,\frac{24\eps}{(1-\eps)^2}\} \cdot \ln z$.
\end{lemma}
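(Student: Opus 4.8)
The plan is to separate a trivial regime from the main one and then prove the main bound via a concentration argument: proportional sampling quickly destroys every large number, so after all $\ell$ steps the surviving numbers are essentially all of value $O(\ln z/(1-\eps))$. In the regime $z \le \max\{18,\frac{24\eps}{(1-\eps)^2}\}\cdot\ln z$ there is nothing to prove, since the numbers remaining after $\ell$ steps have total value at most $z$ (values are only removed or lowered, starting from total $z$), so $A_\ell \le z$ holds deterministically and $z$ is below the asserted bound. So I would henceforth assume $\ell \ge z/2$ and $\frac{z}{\ln z}\ge\max\{18,\frac{24\eps}{(1-\eps)^2}\}$; a short computation from these two inequalities gives $z \ge 4\ln z$ and, after distinguishing $\eps\le\tfrac12$ from $\eps>\tfrac12$, that $\theta := \tfrac{4\ln z}{1-\eps} < z$. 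As motivation, note that a proportional step never increases the expected average (Cauchy--Schwarz gives $\sum a_i^2\ge\tfrac1n(\sum a_i)^2$), so all growth comes from adversarial deletions of small numbers, which over $\ell\approx z$ steps could a priori push the average up to the trivial bound $\tfrac{z}{z-\ell}$ (as large as $\Theta(z)$ when $\ell$ is close to $z$); the resolution is that on a proportional step any number of value $v$ is deleted with probability at least $v/z$, so a large number cannot survive many of them.

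Writing $a_i$ for the (initial) value of the $i$-th number (so $\sum_i a_i = z$ and each $a_i \le z$), I would call an index $i$ \emph{persistently heavy}, and write $\mathcal{H}_i$ for this event, if $i$ is never deleted during the $\ell$ steps and its current value stays strictly above $\theta$ throughout. The key estimate is $\Pr[\mathcal{H}_i] \le z^{-2}$ for every $i$. The reason is that at any step where $i$ is present and its current value exceeds $\theta$, the sum of all present values is at most $z$, so a proportional step --- which occurs independently of the history with probability $1-\eps$ --- deletes $i$ with probability at least $\theta/z$, and this lower bound does not depend on the adversary's earlier deletions or value reductions as long as $i$'s value stays above $\theta$. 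Conditioning on the (independent) set $R$ of steps that are proportional and walking through them in order, being persistently heavy forces $i$ to survive each of those $|R|$ deletions, so $\Pr[\mathcal{H}_i\mid R]\le(1-\theta/z)^{|R|}$; averaging over $R$, a sum of $\ell$ independent $\mathrm{Bernoulli}(1-\eps)$ variables,
\[
  \Pr[\mathcal{H}_i] \;\le\; \left(1-(1-\eps)\tfrac{\theta}{z}\right)^{\ell} \;\le\; \exp\!\left(-(1-\eps)\tfrac{\theta\ell}{z}\right) \;\le\; \exp\!\left(-\tfrac{(1-\eps)\theta}{2}\right) \;=\; z^{-2},
\]
using $0\le(1-\eps)\theta/z\le 1$, then $\ell\ge z/2$, then the definition of $\theta$.

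To finish, I would split the numbers surviving all $\ell$ steps into those with final value at most $\theta$ and the rest. Since values only decrease, a survivor with final value above $\theta$ witnesses $\mathcal{H}_i$ and contributes at most its initial value $a_i$; the at most $z-\ell$ other survivors contribute at most $\theta$ each. Hence
\[
  A_\ell \;\le\; \theta \;+\; \frac{1}{z-\ell}\sum_{i} a_i\,\mathbf{1}[\mathcal{H}_i],
\]
and taking expectations, using $\Pr[\mathcal{H}_i]\le z^{-2}$, $\sum_i a_i = z$, and $z-\ell\ge1$, the second term is at most $\tfrac{1}{z-\ell}\cdot z\cdot z^{-2}=\tfrac{1}{z(z-\ell)}\le\tfrac1z$, so $\Ex{A_\ell}\le\theta+\tfrac1z\le\tfrac{4}{1-\eps}\ln z+2$.

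I expect the main obstacle to be the second paragraph: one must argue carefully that the per-proportional-step deletion probability of a still-heavy number is at least $\theta/z$ regardless of the adaptive adversary's choices (both which number to delete on adversarial steps and how to lower values afterwards), and then turn this into the clean product bound by conditioning on the independent pattern of proportional versus adversarial steps --- essentially a small optional-stopping/supermartingale argument. The remaining work is bookkeeping: checking that the hypotheses $\frac{z}{\ln z}\ge18$ and $\frac{z}{\ln z}\ge\frac{24\eps}{(1-\eps)^2}$ are just strong enough to guarantee $\theta<z$ (so the claim is non-vacuous) and $(1-(1-\eps)\theta/z)^{\ell}\le z^{-2}$, and that the $+2$ slack absorbs the lower-order $1/z$ term.
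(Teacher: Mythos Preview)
Your proof is correct and follows a genuinely different route from the paper's. The paper argues via two bad events: it first applies a Chernoff bound (with a case split $\eps\le 1/3$ versus $\eps>1/3$) to show that with probability at most $1/z$ fewer than $\frac{1-\eps}{4}z$ of the $\ell$ steps are proportional, and then, conditional on having at least that many proportional steps, uses a union bound to show that with probability at most $1/z$ some number of final value above $\frac{8}{1-\eps}\ln z$ survives them; on the complement of both events the average is at most $\frac{8}{1-\eps}\ln z$, and on either bad event the trivial cap $A_\ell\le z$ is invoked. Your approach instead folds the coin randomness directly into the per-step survival bound $1-(1-\eps)\theta/z$ (via the supermartingale/tower argument you sketch), thereby avoiding Chernoff and the case analysis altogether, and then uses linearity of expectation weighted by the initial values $a_i$ rather than a union bound followed by the crude cap $A_\ell\le z$. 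This yields the stated constant $\frac{4}{1-\eps}$ directly; the paper's computation as printed actually produces $\frac{8}{1-\eps}$, a minor slip there. The only place a reader might want more care in your write-up is the conditioning-on-$R$ step: since the adversary is adaptive, one should either check that fixing the full coin sequence $R$ does not disturb the per-proportional-step deletion bound (it does not, because the proportional sample at step $t$ uses fresh randomness given the current state, which is determined by past coins and past samples only), or---cleaner---run the tower property step by step without ever conditioning on $R$, exactly as you anticipate in your closing paragraph.
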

\begin{proof}
Let~$Z$ denote the number of adversarial steps among the first $\ell$ steps. Then~$\Ex{Z}=\eps \ell$. 
We denote by $\calF_1$ the event that~$Z \ge \frac{1+\eps}{2} \cdot \ell$. Note that $(1+\eps)/2 = \eps + (1-\eps)/2$ always lies between $\eps$ and $1$.

By Chernoff bound we get
\begin{align*}
   \Pr[\calF_1]  = \Pr\left[Z \ge \frac{1+\eps}{2}\ell\right] 
	  &= \Pr\left[Z \ge \frac{1+\eps}{2\eps}\eps \ell\right] \\
   & = \Pr\left[Z \ge \left(1+\frac{1-\eps}{2\eps}\right)\cdot\Ex{Z}\right]
    \le \exp\left(-\frac{\min\{\delta,\delta^2\}\cdot \Ex{Z}}{3}\right)
\end{align*}
for $\delta = \frac{1-\eps}{2\eps}$.
We make a case analysis. For $\frac{1-\eps}{2\eps} \ge 1 \Leftrightarrow \eps \le \frac{1}{3}$, $\min\{\delta,\delta^2\} = \delta$, so we have 
\[
\Pr[\calF_1] \le \exp\left(-\frac{\delta\cdot \Ex{Z}}{3}\right)
= \exp \left(-\frac{\frac{1-\eps}{2\eps}\cdot \eps \ell}{3}\right)
= \exp \left(-\frac{(1-\eps)}{6}\ell\right)
\le \exp \left(-\frac{(1-\eps)}{12}z\right),
\] 
where the last inequality follows from $\ell \ge z/2$. We observe that 
\[
\exp \left(-\frac{(1-\eps)}{12}z\right) \le \frac{1}{z} \Leftrightarrow \frac{z}{\ln z} \ge \frac{12}{1-\eps},
\]
	and by $\eps \le 1/3$ and by our lower bound on $z/\ln z$, we have $\frac{z}{\ln z} \ge 18 \ge \frac{12}{1-\veps}$.

If $\eps > 1/3$, we compute similarly that
\[
\Pr[\calF_1] \le \exp \left(-\frac{\frac{(1-\eps)^2}{(2\eps)^2}\cdot \eps \ell}{3}\right)
             \le \exp \left(-\frac{(1-\eps)^2}{12 \eps}\ell\right)
             \le \exp \left(-\frac{(1-\eps)^2}{24 \eps}z\right)
						\le \frac{1}{z}, \]
where the last inequality follows from $z/(\ln z) \ge \frac{24\eps}{(1-\eps)^2}$.

If the event~$\calF_1$ does not happen then in at least~$(1-\frac{1+\eps}{2})\ell=\frac{1-\eps}{2}\ell \ge \frac{1-\eps}{4}z =: z/c'$ steps proportional sampling is used to remove one of the numbers (we set $c' = 4/(1-\eps)$). 
We will show that with high probability after these steps all remaining numbers are at most~$2c'\ln{z}$. Let~$\calF_2$ denote the event that after $z/c'$ steps of proportional sampling at least one number with final value at least $2c'\ln{z}$ is remaining. Furthermore, let~$\calE_i$ denote the event that the $i$th number~$a_i$ remains after $z/c'$ steps of proportional sampling and its final value~$\tilde{a}_i$ is at least $2c'\ln{z}$ (remember that the adversary can decrease numbers during the process but not increase and hence~$\tilde{a}_i\le a_i$). Then~$\calF_2 = \calE_1\cup\ldots\cup \calE_z$. If~$\calE_i$ occurs then the $i$th number is in every step at least~$\tilde{a}_i\ge 2c'\ln{z}$. Since the numbers $a_1,\ldots a_z$ have average~$1$, their sum is~$z$. The sum of the remaining numbers cannot increase during the process. Hence, in every step the probability of taking the $i$th number is at least $(2c'\ln{z})/z$. This implies
\[
   \Pr[\calE_i] \le \left(1-\frac{2c'\ln{z}}{z}\right)^{z/c'} \le \exp\left(-2\ln{z}\right) = \frac{1}{z^2}.
\]
We use a union bound to obtain
\[
   \Pr[\calF_2] =
   \Pr[\exists i\in[z]: \calE_i] \le \frac{1}{z}.
\]

If neither~$\calF_1$ nor~$\calF_2$ occurs then the final value of each remaining number is at most~$2c'\ln{z}$. Hence, in this case, also the average is bounded from above by~$2c'\ln{z}$. Otherwise we only use the trivial upper bound of~$z$ for the average of the remaining numbers (observe that initially each~$a_i$ is at most~$z$ because the average is~$1$). Altogether we obtain
\begin{align*}
  \Ex{A_{\ell}} & \le \Pr[\neg\calF_1 \wedge \neg\calF_2]\cdot 2c'\ln{z} + \Pr[\calF_1\vee\calF_2]\cdot z\\
                & \le 2c'\ln{z} + (\Pr[\calF_1]+\Pr[\calF_2])\cdot z\\
                & \le 2c'\ln{z} + \left(\frac{1}{z}+\frac{1}{z}\right)\cdot z\\
                & = 2c'\ln(z)+2\\
				& = 4/(1-\eps) \ln z + 2.
\end{align*}
For the second inequality stated in the lemma, we only observe that even if we draw all but one number, the average cannot increase beyond $z$ since the sum of the numbers is $z$. Thus $\Ex{A_{\ell}} \le z$ is true for any $1 \le \ell \le z$. \qedhere
\end{proof}

\begin{lemma}\label{lem:SmallL}
Let~$\ell<z/2$. Then~$\Ex{A_{\ell}} \le 2$.
\end{lemma}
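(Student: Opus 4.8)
The plan is to observe that for $\ell < z/2$ the claimed bound actually holds \emph{pointwise} — with probability one — so no probabilistic argument is needed at all, in contrast to the proof of Lemma~\ref{lem:LargeL}. First I would record the two structural invariants of the random experiment. Each of the $\ell$ steps removes exactly one number (whether chosen by the adversary or by proportional sampling), so after $\ell$ steps precisely $z-\ell$ numbers remain. And the sum of the numbers still present is non-increasing over time: a removal subtracts the removed value, which is non-negative, and the adversary's additional operations only lower values of remaining numbers. Since $a_1,\dots,a_z$ have average $1$, their initial sum equals $z$, and hence the sum of the remaining numbers after any number of steps is at most $z$.

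Combining these two facts, after $\ell$ steps the average $A_\ell$ of the remaining numbers satisfies $A_\ell \le z/(z-\ell)$ with probability one. Using the hypothesis $\ell < z/2$, we get $z-\ell > z/2$ and therefore $A_\ell < z/(z/2) = 2$. Taking expectations yields $\Ex{A_{\ell}} \le 2$, as claimed.

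I do not expect any real obstacle here; the only point worth noticing is that, unlike in Lemma~\ref{lem:LargeL} where one must carefully control the proportional-sampling process via Chernoff and union bounds, for a short process the elementary pigeonhole estimate — total mass at most $z$ distributed among more than $z/2$ survivors — already suffices, regardless of how the adversary behaves and regardless of the value of $\eps$.
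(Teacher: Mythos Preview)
Your argument is correct and is essentially the same as the paper's: both bound $A_\ell$ pointwise by $z/(z-\ell)<2$, using that the total mass never exceeds $z$ while exactly $z-\ell$ numbers remain. The paper phrases this as ``in the worst case the adversary removes the $\ell$ smallest numbers,'' but the underlying estimate is identical to yours.
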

\begin{proof}
In the worst case all steps are adversarial and the $\ell$ smallest numbers are removed. Then the average of the remaining numbers is at most
\[
   \frac{z}{z-\ell} < \frac{z}{z-z/2} = 2.\qedhere
\]
\end{proof}

Together Lemma~\ref{lem:LargeL} and Lemma~\ref{lem:SmallL} imply the following corollary.
\begin{cor}\label{cor:RandomExperiment}
Let~$\eps\in(0,1)$ and $1 \le \ell \le z-1$.
Then for ${z} \ge \max\{18,\frac{24\eps}{(1-\eps)^2}\}\cdot {\ln z}$, we get
\[ 
  \Ex{A_{\ell}} \le \frac{4}{1-\eps}\cdot \ln z + 2, 
\]
and for $z \le \max\{18,\frac{24\eps}{(1-\eps)^2}\} \cdot \ln z$, we have $\Ex{A_{\ell}} \le \max\{18,\frac{24\eps}{(1-\eps)^2}\} \cdot \ln z$.
\end{cor}

Now we are ready to prove Lemma~\ref{lemma:IncreaseAverage}.

\lemIncAverage*
\begin{proof}%[Proof of Lemma~\ref{lemma:IncreaseAverage}]
Given realization~$F_i$, after the first~$i$ iterations there are~$z=u_i\le k$ uncovered clusters. Each of them has certain costs with respect to the center set after the first~$i$ iterations. The costs of each cluster do not increase in the following iterations anymore because only new centers are added. In any iteration the costs of these clusters may decrease and one uncovered cluster may become covered. If the latter happens, the average costs of the uncovered clusters can increase (if the costs of the uncovered cluster that becomes covered are less than the average costs of the uncovered clusters). Hence, only the non-wasted iterations are of interest.

The costs of the uncovered clusters after the first~$i$ iterations correspond to the numbers~$a_1,\ldots,a_z$ in the random experiment above. We scaled the instance such that the sum of the~$a_i$ is equal to~$z$. This is without loss of generality. In each iteration of noisy $k$-means++ either a covered cluster is hit again, which can only reduce the numbers~$a_i$, or an uncovered cluster becomes covered, in which case the corresponding number is removed. Conditioned on covering an uncovered cluster, the probability~$p_i$ that~$a_i$ is removed is at least~$\frac{1-\eps_1}{1+\eps_2}\cdot\frac{a_i}{S}$, where~$S$ denotes the sum of the costs of the uncovered clusters (i.e., the sum of the remaining~$a_i$). We can simulate the probability distribution induced by the probabilities~$p_i$ by mixing two distributions: with probability $\frac{1-\eps_1}{1+\eps_2}$ we do proportional sampling, i.e., we choose~$a_i$ with probability~$\frac{a_i}{S}$, and with probability~$1-\frac{1-\eps_1}{1+\eps_2}$ we sample according to some other distribution to obtain the right probabilities~$p_i$. In the abstract random experiment analyzed above this second distribution is selected by an adversary. For~$\eps=\frac{\eps_1+\eps_2}{1+\eps_2}$ we have
\[
   1-\eps = 1-\frac{\eps_1+\eps_2}{1+\eps_2}
   = \frac{1+\eps_2-(\eps_1+\eps_2)}{1+\eps_2}   
   = \frac{1-\eps_1}{1+\eps_2}.
\]
Hence, Corollary~\ref{cor:RandomExperiment} applies to noisy $k$-means++ with~$\eps=\frac{\eps_1+\eps_2}{1+\eps_2}$.
Observe that then
\[
\frac{24 \eps}{(1-\eps)^2} = \frac{24 (\eps_1+\eps_2)(1+\eps_2)^2}{(1+\eps_2) (1-\eps_1)^2}
= \frac{24 (\eps_1+\eps_2) ( 1+\eps_2)}{(1-\eps_1)^2}.
\]
\end{proof}

\subsection{Applications}

A straightforward application of noisy $k$-means++ is the observation that probabilities are not computed exactly on actual machines. If we assume that the error can be bounded by constant multiplicative factors, then Theorem~\ref{thm:noisy} ensures that we still get at least an $\mathcal{O}(\log^2 k)$-approximation in expectation.

\paragraph{Not so greedy $k$-means++}

Furthermore, we consider the following variant of the greedy $k$-means++ algorithm (Algorithm~\ref{notsogreedykmeans++}).

\begin{algorithm}[hb]
	\caption{Moderately greedy $k$-means++}\label{notsogreedykmeans++}
	\begin{algorithmic}[1]
		\State {\bf Input}: Set $X\subseteq \R^d$, integers $k,l$
		\State {\bf Output}: $C\subseteq X, |C|=k$
		\State $C=\emptyset$
		\State Sample a point $c_1$ independently and uniformly at random from $X$.
		\State Let $C=\{c_1\}$.
		\For{$i=2$ to $k$}
		  \State With probability $p$, sample one point $c_i$ with $D^2$-sampling and set $C=C\cup \{c_i\}$.
			\State With the remaining probability:
      \Statex \quad\quad\quad Sample a set $S$ of $\ell$ points independently with $D^2$-sampling from $X$ with respect to $C$.
			\Statex \quad\quad\quad Let $c_i=\arg \min_{u\in S} \Phi(X,C \cup \{u\})$. 
			\Statex \quad\quad\quad Update $C=C\cup \{c_i\}$.
		\EndFor
		\State Return $C$
	\end{algorithmic}
\end{algorithm}

Let $x \in P$ be any point. Say that $p_i(x)$ is the probability to draw $x$ with one $D^2$-sample from $X$ based on the center set $c_1,\ldots,c_{i-1}$. Then the probability $q_i(x)$ to sample $x$ in iteration $i$ of the above algorithm satisfies
\[
p \cdot p_i(x) \le q_i(x) \le [(1-p) \cdot \ell + p]\cdot p_i(x),
\]
since with probability $p$, we do exactly the same as $k$-means++, and with probability $(1-p)$, we sample $\ell$ times, which can at most boost the probability by a factor of $(1-p) \cdot \ell$.
Assume that $p$ is a constant. Then by Theorem~\ref{thm:noisy}, moderately greedy $k$-means++ has an expected approximation guarantee of $\mathcal{O}(\ell^3\cdot \log^2 k)$ (for large $k$).

\section{Conclusions and Open Problems}

We have initiated the theoretical study of greedy $k$-means++. Our lower bound of~$\Omega(\ell\cdot\log{k})$ demonstrates that in the worst case the best choice is~$\ell=1$, i.e., the normal $k$-means++ algorithm. One question that arises immediately is if there exists a matching upper bound. Since greedy $k$-means++ is successful in experiments, it would also be interesting to find a notion of clusterability under which it outperforms $k$-means++. Our lower bound example, however, satisfies many of the common clusterability assumptions such as separability \cite{ORSS2012} and approximation stability \cite{BBV2009}.

It would be interesting to see if our analysis of noisy $k$-means++ can be improved. In particular, we do not believe that the additional $\mathcal{O}(\log{k})$ term coming from Corollary~\ref{cor:RandomExperiment} is necessary. We conjecture that in this corollary~$\Ex{A_{\ell}}$ is bounded from above by a constant depending on~$\epsilon$ for every~$\ell$.

\section{Acknowledgements}
The authors thank Emre Celebi for pointing that greedy $k$-means++ is discussed in~\cite{V07}.

\bibliographystyle{plain}
\bibliography{ref}
\end{document}